\documentclass[a4paper,UKenglish,numberwithinsect,cleveref,autoref,thm-restate]{lipics-v2021}
\hideLIPIcs  %

\usepackage[all]{xy}

\Crefname{appsec}{Appendix}{Appendices}

\usepackage{latexsym}
\usepackage{amsfonts}
\usepackage{ifthen}

\def\cC{\mathcal{C}}
\def\cD{\mathcal{D}}

\def\cP{\mathcal{P}}

\def\cR{\mathcal{R}}
\def\cS{\mathcal{S}}

\def\cV{\mathcal{V}}

\newcommand{\sort}[2][]{\ifthenelse{\equal{#1}{}}{\mbox{\sl{#2}}}{\mbox{\sl#1{#2}}}}

\newcommand{\SortBV}[1][]{\ifthenelse{\equal{#1}{}}{\sort{bv}}{\sort{bv}_{#1}}}
\newcommand{\sortBV}[1][]{\ifthenelse{\equal{#1}{}}{\sort[\scriptsize]{bv}}{\sort[\scriptsize]{bv}_{#1}}}

\newcommand{\symb}[1]{\mathsf{#1}}

\newcommand{\Pos}{\mathcal{P}os}
\newcommand{\Var}{\mathcal{V}ar}

\newcommand{\Dom}{\mathcal{D}om}

\newcommand{\mgu}[1][]{\ifthenelse{\equal{#1}{}}{\mathit{mgu}}{\mathit{mgu}_{#1}}}

\newcommand{\cStheory}[1][\mathit{theory}]{\ifthenelse{\equal{#1}{}}{\cS}{\cS_{#1}}}

\newcommand{\Sigtheory}[1][\mathit{theory}]{\ifthenelse{\equal{#1}{}}{\Sigma}{\Sigma_{#1}}}

\newcommand{\Val}[1][]{%
\ifthenelse{\equal{#1}{}}{%
{\mathcal{V}al}%
}{%
{\mathcal{V}al}_{#1}%
}}

\newcommand{\Eval}[2][]{%
\ifthenelse{\equal{#1}{}}{%
{[\![ #2 ]\!]}
}{%
{[\![ #2 ]\!]}_{#1}
}}

\newcommand{\NF}{\mathit{NF}}

\newcommand{\ito}{\xrightarrow{\mathsf{i}}{\!\!}}

\newcommand{\DP}{\mathit{DP}}

\newcommand{\SuccST}{\succ}

\newcommand{\cRfgh}{\cR_1}
\newcommand{\cPfgh}{\cP_1}
\newcommand{\cRfab}{\cR_2}
\newcommand{\cRghabc}{\cR_3}
 
\title{Termination of Innermost-Terminating Right-Linear Overlay Term Rewrite Systems} %

\titlerunning{Termination of Innermost-Terminating Right-Linear Overlay TRSs} %

\author{Naoki Nishida%
}{Graduate School of Informatics, Nagoya University, Furo-cho, Chikusa-ku, Nagoya 4648601,
Japan
\and \url{https://www.lctrs.jp/nishida/} }{nishida@i.nagoya-u.ac.jp}{https://orcid.org/0000-0001-8697-4970}{JSPS KAKENHI Grant Number JP24K02900}%

\authorrunning{N. Nishida} %

\Copyright{Naoki Nishida} %

\ccsdesc[100]{Theory of computation~Rewrite systems} %

\keywords{%
termination,
innermost termination,
dependency pair
} %

\acknowledgements{We thank the anonymous reviewers of WST~2026 not only for their valuable feedback, which improved the paper, but also suggesting several interesting future directions.}%

\nolinenumbers %

\EventEditors{}
\EventNoEds{1}
\EventLongTitle{}
\EventShortTitle{WST 2026}
\EventAcronym{WST}
\EventYear{2026}
\EventDate{}
\EventLocation{}
\EventLogo{}
\SeriesVolume{}
\ArticleNo{}

\begin{document}

\maketitle

\begin{abstract}
It has been shown that, regarding a terminating right-linear overlay term rewrite system (TRS), any rewrite sequence terminating in a normal form can be simulated by an innermost reduction.
In this paper, using this simulation property, we show that 
for a right-linear overlay TRS,
there is no infinite minimal dependency-pair chain if and only if there is no infinite innermost minimal dependency-pair chain.
As a consequence, termination and innermost termination coincide for the class of right-linear overlay TRSs.
\end{abstract}

\section{Introduction}
\label{sec:intro}

Termination is a fundamental property and remains a primary focus of research in term rewriting.
Termination is sometimes assumed for target rewrite systems, e.g., 
\emph{rewriting induction}~\cite{Red90} requires given TRSs to be terminating.
Thus, %
many powerful tools have been developed and are still improved (cf.~termCOMP~\cite{TermCOMP2015}).

Termination is the non-existence of infinite rewrite sequences, and innermost termination is the non-existence of infinite innermost rewrite sequences.
Since any innermost rewrite sequence is just a rewrite sequence,
termination implies innermost termination, but innermost termination does not, in general, imply termination;
an innermost-terminating TRS need not be terminating.
Viewed in this light, in proving innermost termination, we can use sufficient conditions for termination, together with those for innermost termination, some of which are weaker than the corresponding ones for termination.
For example, in applying the narrowing processor in the \emph{dependency pair} (DP) \emph{framework}~\cite{GTSK04,GTSKF06} to a DP problem $(\cP,\cR,e)$,%
\footnote{A DP problem is a triple $(\cP,\cR,e)$, where $\cP$ is a set of DPs, $\cR$ is a TRS $\cR$, and $e$ is a flag $e \in \{\mathbf{t},\mathbf{i}\}$ for ``\textbf{t}ermination'' and ``\textbf{i}nnermost termination''.}
if $e =\mathbf{t}$,
then right-linearity is required for a transformed DP in $\cP$, 
and otherwise (i.e., $e =\mathbf{i}$), right-linearity is not required (cf.~\cite[Definition~28]{GTSKF06}).
The reduction pair processor based on ``usable rules'' requires ``$C_E$-compatibility'' for DP problems with $\mathbf{t}$ but not for those with $\mathbf{i}$ (cf.~\cite[Theorem~17]{GTSKF06}).
On the other hand, for a class of TRSs, for which termination and innermost termination coincide, we can use proof techniques for innermost-termination in order to prove termination.
Therefore, such a class of TRSs is very interesting from both theoretical and implementation perspectives.

The largest known class of TRSs, for which termination and innermost termination coincide, is the class of locally confluent overlay TRSs, i.e.,
a locally confluent overlay TRS is terminating if and only if it is innermost terminating~\cite{Gra95}.%
\footnote{Another incompatible class is the class of right-linear overlay TRSs, which has been shown in an unpublished note~\cite{Sakai03note}.}
Being an overlay system is a decidable syntactic property%
\footnote{"Syntactic properties of a TRS" refer to properties determined solely by structural characteristics of the TRS and not depending on the reduction of the TRS.}
of TRSs, but local confluence is not, while there are some syntactic sufficient conditions for local confluence of TRSs, e.g., non-overlappingness.
Note that to show local confluence, we sometimes first prove termination and then joinability of critical pairs.
Identifying classes of TRSs where innermost termination implies termination is therefore of significant interest.

In this paper, we show that for a right-linear overlay TRS, there is no infinite minimal dependency-pair chain if and only if there is no infinite innermost minimal dependency-pair chain, 
where dependency-pair chains are sequences of dependency pairs of the right-linear overlay TRS.
To be more precise, we show that
for a set $\cP$ of DPs such that $\cR\cup\cP$ is a right-linear overlay TRS, there is no infinite minimal $(\cP,\cR)$-chain if and only if there is no infinite innermost minimal $(\cP,\cR)$-chain.
As a consequence, 
termination and innermost termination coincide for the class of right-linear overlay TRSs (cf.~\cite{Sakai03note}).

Our main result is based on the proof of the following claim:
Regarding a terminating right-linear overlay TRS, any rewrite sequence terminating in a normal form can be simulated by an innermost reduction~\cite{SOS03}.
To prove our main goal ``innermost termination implies termination'', given an infinite rewrite sequence of an innermost-terminating right-linear overlay TRS $\cR$, we would like to show the existence of an infinite innermost rewrite sequence of $\cR$.
Unfortunately, for this approach, we cannot use the result in~\cite{SOS03}, which only performs for finite rewrite sequences terminating in normal forms.
On the other hand, for a (possibly infinite) minimal $(\cP,\cR)$-chain, all terms in the chain are terminating w.r.t.\ $\cR$.
Thus, using an auxiliary lemma for the main result in~\cite{SOS03}, we show that for a finite minimal $(\cP,\cR)$-chain of length $n$, there exists an innermost minimal $(\cP,\cR)$-chain of length $n$, provided that $\cR\cup\cP$ is a right-linear overlay system.
Note that the length of a $(\cP,\cR)$-chain is defined by the number of DPs included.
Using this auxiliary property, our proof for the main result proceeds by contradiction as follows:
\begin{enumerate}
    \item Assume, for the sake of contradiction, that there is no infinite innermost minimal $(\cP,\cR)$-chain and there is an infinite minimal $(\cP,\cR)$-chain starting with a term $s^\#$.
    \item Let $n$ be the maximum length of innermost minimal $(\cP,\cR)$-chain starting with $s^\#$.
    \item Construct an innermost minimal $(\cP,\cR)$-chain of length $n+1$ from a finite minimal $(\cP,\cR)$-chain of the same length, where these chains start with $s^\#$.
    \item This contradicts the assumption about $n$.
\end{enumerate}
Note that the above approach relies on termination of proper subterms in the minimal $(\cP,\cR)$-chain, but not the property of being, e.g., finitely branching.

Regarding our main result %
on dependency-pair chains, we do not assume that a given TRS $\cR$ is innermost terminating.
Thus, the result provides a DP processor %
that converts a given DP problem $(\cP,\cR,\mathbf{t})$ to $(\cP,\cR,\mathbf{i})$ if $\cR\cup\cP$ is a right-linear overlay system.
In the DP framework~\cite{GTSK04,GTSKF06}, some DP processors, such as the narrowing processor, transform DPs into rewrite rules that are not DPs of given TRSs.
To use the switching processor mentioned above in the DP framework, we do not restrict $\cP$ in our main result %
to be sets of DPs of given TRSs, and $\cP$ is assumed to be a non-collapsing right-linear TRS such that $\cR\cup\cP$ is an overlay system.

\section{Preliminaries}
\label{sec:preliminaries}

In this section, we briefly recall some syntactic properties of TRSs and some essential results on innermost rewriting and dependency pairs.
Familiarity with basic notions and notations on term rewriting, rewrite strategies, and dependency pairs~\cite{BN98,TRS,Ohl02} is assumed.

The reduction of a \emph{term rewrite system} (TRS, for short) $\cR$ is denoted by $\to_\cR$:
$s \to_\cR t$ if and only if there exist a rewrite rule $\ell \to r \in \cR$, a position $p$ of $s$, and a substitution $\theta$
such that $s|_p = \ell\theta$ and $t = s[r\theta]_p$.
We often write $s \to_{p,\cR} t$ instead of $s \to_\cR t$, and 
write $s \to_{>\varepsilon,\cR} t$ if $p > \varepsilon$.
The \emph{innermost reduction} of $\cR$ is denoted by $\ito_\cR$:
$s \mathrel{\ito_\cR} t$ (or $s \mathrel{\ito_{p,\cR}} t$) if and only if $s \to_{p,\cR} t$
and $s|_p$ is an innermost redex of $\cR$ (i.e., every proper subterm of $s|_p$ is a normal form of $\cR$).
A sequence $s_0 \to_\cR s_1 \to_\cR \cdots$ is called a \emph{rewrite sequence} of $\cR$.
A sequence $s_0 \mathrel{\ito_\cR} s_1 \mathrel{\ito_\cR} \cdots$ is called an \emph{innermost rewrite sequence} of $\cR$.
A term $t$ is said to be \emph{terminating} (resp.\ \emph{innermost terminating}) w.r.t.\ a TRS
if there is no infinite (resp.\ innermost) rewrite sequence of the TRS, which starts from $t$.
The set of normal forms of $\cR$ over $\Sigma$ is denoted by $\NF_\cR(\Sigma,\cV)$.
A rewrite rule $\ell \to r$ is called \emph{right-linear} if the right-hand side $r$ is linear.
Rewrite rule $\ell \to r$ is called \emph{collapsing} if the right-hand side $r$ is a variable.
Note that the right-hand side of a non-collapsing rule is not a variable.
A TRS $\cR$ is called \emph{right-linear} (resp.\ \emph{non-collapsing}) if all rewrite rules in $\cR$ are right-linear (resp.\ non-collapsing).
TRS $\cR$ is said to \emph{have an inner-overlap} if there exist rewrite rules $\ell \to r$ and $\ell' \to r'$ in $\cR$ such that 
a renamed proper non-variable subterm $s$ of $\ell$ (i.e., $\Var(s)\cap\Var(\ell')=\emptyset$) is unifiable with $\ell'$.
TRS $\cR$ is called an \emph{overlay system} if $\cR$ has no inner-overlap.

Regarding a right-linear overlay TRS $\cR$, every rewrite sequence terminating in a normal form of $\cR$ can be simulated by an innermost reduction of $\cR$.
\begin{theorem}[\cite{SOS03}]
\label{thm:SOS03-main}
Let $\cR$ be a terminating right-linear overlay TRS over a signature $\Sigma$.
For all terms $s,t \in T(\Sigma,\cV)$,
if $s \to_\cR^* t \in \NF_\cR(\Sigma,\cV)$, 
then $s \mathrel{\ito_\cR^*} t$.
\end{theorem}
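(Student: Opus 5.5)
We argue by well-founded induction on $s$ with respect to $\to_\cR^+ \cup \rhd$ (rewriting combined with the proper subterm relation). This order is well founded because $\cR$ is terminating. The goal is the statement of \cref{thm:SOS03-main}: if $s \to_\cR^* t$ and $t$ is a normal form, then $s \mathrel{\ito_\cR^*} t$.

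\textbf{Case: no root step.} If the sequence $s \to_\cR^* t$ never rewrites at the root, then $s = f(s_1,\dots,s_k)$, $t = f(t_1,\dots,t_k)$, and each $s_i \to_\cR^* t_i$ with $t_i$ a normal form. The induction hypothesis applies to the proper subterms $s_i$ and gives $s_i \mathrel{\ito_\cR^*} t_i$. Innermost steps inside an argument stay innermost in the context $f(\dots)$, so we obtain $s \mathrel{\ito_\cR^*} t$.

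\textbf{Case: a root step.} Otherwise the sequence has the form $s \to_{>\varepsilon,\cR}^* \ell\theta \to_{\varepsilon} r\theta \to_\cR^* t$.
\begin{enumerate}
\item Normalise the arguments innermost first. Every argument $s_i$ of $s$ is terminating, so it has some normal form. The delicate point is to choose normal forms that still match $\ell$. We aim to prove the key lemma below, which carries the main weight.
\item \emph{Key lemma.} If $u \to_\cR^* \ell\theta$ with only non-root steps, then there is a substitution $\theta'$ with $\theta \to_\cR^* \theta'$ such that $u$ reaches $\ell\theta'$ by innermost normalisation of the arguments, with $\theta'$ normalised.
\item Apply the lemma to $s$ to get $s \mathrel{\ito_\cR^*} \ell\theta'$ with all proper subterms of $\ell\theta'$ normal. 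Then $\ell\theta' \mathrel{\ito_\cR} r\theta'$ is an innermost step.
\item It remains to show $r\theta' \to_\cR^* t$ and then invoke the induction hypothesis on $r\theta'$, which is a reduct of $s$. For this, use right-linearity: each variable occurs at most once in $r$, so the residual reductions $\theta \to^* \theta'$ can be commuted past the rest of the original sequence $r\theta \to^* t$.
\end{enumerate}
If commutation is problematic, a cleaner alternative is an intermediate claim: for any $u \to^* t \in \NF$ there is an innermost-normalised $u'$ with $u \mathrel{\ito^*} u'$ and $u' \to^* t$ along a sequence shorter in a suitable sense.

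\textbf{Main obstacle: proving the key lemma.} Rewriting the arguments of $s$ to normal forms need not keep the term an instance of $\ell$. The steps $s_i \to^* (\ell|_i)\theta$ may rewrite at positions inside the pattern of $\ell$ (the non-variable part). The overlay property is what controls this: a redex strictly below the root of $\ell\theta$ cannot overlap a non-variable position of $\ell$. Hence, in $\ell\theta$ itself, all redexes lie inside the $\theta$-part.

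I would proceed as follows. Take a sequence $s_i \to^* (\ell|_i)\theta$. Its last step that creates pattern structure is a root step in some subterm. Apply the induction hypothesis to that subterm (it is smaller in the well-founded order) to get an innermost derivation to a normal-form-compatible term. Then argue via right-linearity that duplicated or erased subterms do not break the simulation: right-linearity prevents duplication of unnormalised subterms, so each normalisation needs to be performed only once. Left-nonlinearity of $\ell$ needs care. Equal instances in $\theta$ must reach the same normal form, so I would require the normalisations to be chosen consistently, for example by normalising $\theta(x)$ itself along the given sequence.

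I expect this consistency for non-left-linear rules, together with the commutation in step~4, to be the hardest part. I would first try to obtain both from a single "postponement" lemma: non-innermost steps can be reordered after innermost ones when the rules are right-linear and the system is overlay.
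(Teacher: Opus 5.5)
There is a genuine gap, and it lies in step~4 rather than in what you call the key lemma.

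\textbf{Why step~4 fails.} In step~2 your $\theta'$ is chosen using only the prefix $s \to_{>\varepsilon,\cR}^* \ell\theta$. Step~4 then needs $r\theta' \to_\cR^* t$, which you justify by commuting the reductions $\theta \to^* \theta'$ past $r\theta \to_\cR^* t$. Right-linear overlay systems need not be confluent, and the claim is false. Take the terminating right-linear overlay TRS $\{\symb{f}(x) \to \symb{g}(x),\ \symb{a} \to \symb{b},\ \symb{a} \to \symb{c}\}$ and the sequence $\symb{f}(\symb{a}) \to_{\varepsilon,\cR} \symb{g}(\symb{a}) \to_\cR \symb{g}(\symb{c})$. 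Your key lemma is satisfied by $\theta' = \{x \mapsto \symb{b}\}$, but $\symb{g}(\symb{b})$ is a normal form distinct from $\symb{g}(\symb{c})$. Right-linearity only prevents duplication; it cannot commute $\symb{a} \to \symb{b}$ past $\symb{a} \to \symb{c}$. Your fallback ``intermediate claim'' is essentially the statement to be proved, and you give no argument for it. The missing idea is to reverse the order of choices. The normal forms $x\theta'$ for $x \in \Var(r)$ must be extracted from the continuation $r\theta \to_\cR^* t$. Only afterwards should the arguments of $s$ be normalised towards $\ell\theta'$.

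\textbf{How the paper does it.} It strengthens the induction statement (\Cref{auxlem:OL-RL-ISN-implies-SN}): for a \emph{linear} term $s$ with $s\sigma \to_\cR^! t$, there is a normalised $\sigma'$ with $x\sigma \to_\cR^* x\sigma'$ for $x \in \Var(s)$ and $s\sigma' \mathrel{\ito_\cR^!} t$. In the root case this is applied first to $r\theta$, yielding $\theta'$ on $\Var(r)$; this, not commutation, is where right-linearity is used. Erased variables of $\ell$ receive arbitrary normal forms, and the overlay property makes each $\ell_i\theta'$ a normal form, where $\ell = f(\ell_1,\ldots,\ell_k)$. The hypothesis applied to the arguments with these fixed targets then gives $s\sigma' \mathrel{\ito_\cR^*} \ell\theta' \mathrel{\ito_\cR} r\theta' \mathrel{\ito_\cR^!} t$. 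Linearity of $s$ is what allows the per-argument substitutions to be merged. The theorem follows by linearising $s$ through a variable renaming (\Cref{lem:reduction-simulated-by-innermost-one}).

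\textbf{Repair within your framework.} You can also fix your version without the strengthening. Apply your induction hypothesis to $r\theta$ itself (a reduct of $s$) to get $r\theta \mathrel{\ito_\cR^*} t$. Since $r$ is linear, and an innermost step strictly above a variable position of $r$ requires the subterm there to be normal, this innermost sequence can be reordered as $r\theta \mathrel{\ito_\cR^*} r\theta' \mathrel{\ito_\cR^*} t$ with $\theta'$ normalised.

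In either version your ``main obstacle'' disappears. Once $\theta'$ is fixed, $s_i \to_\cR^* \ell_i\theta \to_\cR^* \ell_i\theta'$ is a reduction to a normal form, so the induction hypothesis on $s_i$ directly gives $s_i \mathrel{\ito_\cR^*} \ell_i\theta'$. Non-left-linearity of $\ell$ is harmless because $\theta'$ is a single substitution.
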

Innermost termination of locally confluent overlay TRSs implies termination.
\begin{theorem}[\cite{Gra95}]
\label{thm:SIN-SN-for-LCR-overlay-systems}
A locally confluent overlay TRS $\cR$ is terminating if and only if it is innermost terminating.
\end{theorem}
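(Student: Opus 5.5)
The statement is Gramlich's theorem: a locally confluent overlay TRS is terminating if and only if it is innermost terminating. The direction from termination to innermost termination is immediate, since every innermost rewrite sequence is a rewrite sequence. For the converse, I would argue by contradiction. Assume $\cR$ is innermost terminating but not terminating, and pick a \emph{minimal} non-terminating term $t$, i.e., $t$ is non-terminating but all its proper subterms are terminating. In any infinite rewrite sequence from $t$, a root step must eventually occur. Otherwise the sequence would induce an infinite sequence in some argument, contradicting minimality.

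The key auxiliary property I would establish first is that every innermost-terminating term of a locally confluent overlay TRS has a unique normal form. Moreover, every term that is both terminating and innermost terminating should be confluent. More specifically, I would show by well-founded induction on the innermost-reduction relation the following claim: if $s$ is terminating, then all reducts of $s$ reach the same normal form, namely the one given by innermost normalization. Local confluence together with Newman's lemma, applied to the terminating part of the term, gives this directly. Below a minimal non-terminating term all proper subterms are terminating, so they are confluent.

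The main step is as follows. Write $t = f(t_1,\dots,t_n)$ with every $t_i$ terminating, and let $t_i{\downarrow}$ be the unique normal form of $t_i$. Consider an infinite sequence $t \to_{>\varepsilon}^* f(u_1,\dots,u_n) = \ell\sigma \to_\varepsilon r\sigma \to \cdots$. Since $\cR$ is an overlay system, the redex pattern $\ell$ has no rule overlapping its proper non-variable subterms. Hence reducing $\sigma$ to $\sigma{\downarrow}$ keeps $\ell\sigma{\downarrow}$ an innermost redex: each of its proper subterms is a normal form, because otherwise a redex would sit at a non-variable position of $\ell$, giving an inner overlap, or inside $\sigma{\downarrow}$, which is impossible. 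By confluence of the $u_i$, we have $t_i \to^* u_i \to^* u_i{\downarrow} = t_i{\downarrow}$. So $t \mathrel{\ito^*} f(t_1{\downarrow},\dots,t_n{\downarrow}) = \ell\sigma{\downarrow} \mathrel{\ito} r\sigma{\downarrow}$.

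The obstacle is to show that $r\sigma{\downarrow}$ is again non-terminating, so the construction can be iterated into an infinite innermost sequence. I would handle this with local confluence: from $r\sigma$ (non-terminating) and $r\sigma \to^* r\sigma{\downarrow}$, suppose $r\sigma{\downarrow}$ were terminating. Then the standard Gramlich argument applies, using that non-terminating terms which are innermost normalizable with a unique normal form behave consistently. I expect this requires a careful induction showing that, for locally confluent overlay systems, if $s \to s'$ with $s$ non-terminating via that step, then $s'$ or some minimal non-terminating subterm persists. Iterating yields an infinite innermost sequence, contradicting innermost termination.
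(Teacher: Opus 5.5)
There is a genuine gap, and it is at the step you flag yourself as the obstacle.

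Everything up to $t \mathrel{\ito_\cR^*} \ell(\sigma{\downarrow}) \mathrel{\ito_{\varepsilon,\cR}} r(\sigma{\downarrow})$ is sound. Terminating terms have unique normal forms (Newman's lemma applied below a terminating term), so $\sigma{\downarrow}$ is well defined even for non-linear $\ell$. The overlay property then makes $\ell(\sigma{\downarrow})$ an innermost redex. This is exactly where local confluence replaces right-linearity, as the paper's remark at the end of Section~3 says. (Your first formulation, unique normal forms for merely innermost-terminating terms, does not follow from Newman's lemma and is essentially a consequence of the theorem. You only use the terminating version, so this is minor.)

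You give no argument that $r(\sigma{\downarrow})$ is non-terminating, and nothing you have established implies it. The reduction $r\sigma \to_\cR^* r(\sigma{\downarrow})$ points the wrong way, because reduction does not preserve non-termination in general. Showing that this particular reduction inside the substitution part preserves it essentially means transporting the whole infinite tail from $r\sigma$ over to $r(\sigma{\downarrow})$, which is the real content of the theorem. Appealing to ``the standard Gramlich argument'' is therefore circular. The ``careful induction'' you sketch has no well-founded measure, since it would have to run along an infinite derivation.

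The missing idea is never to restart the construction from $r(\sigma{\downarrow})$, but to carry one fixed infinite derivation along. The paper gets this theorem, via that closing remark, by running its own pipeline with local confluence in place of right-linearity:
\begin{itemize}
\item It fixes an infinite \emph{minimal} chain. After every root step it passes to a minimal non-terminating subterm $r|_p\sigma$, with $p$ a non-variable position of $r$ (as for dependency pairs), so every term in the chain has terminating arguments.
\item It simulates only finite prefixes, backwards by induction on the number of root steps (\Cref{lem:exists-the-same-length-innermost-chain}, built on \Cref{lem:OL-RL-ISN-implies-SN}). The backward direction is needed there because, without confluence, the normal form chosen for a variable must fit the rest of the chain.
\item It concludes with the maximum-length contradiction of \Cref{thm:inf-chain-iff-inf-innermost-chain} together with \Cref{thm:DP-criteria}.
\end{itemize}

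In your locally confluent setting a forward repair is enough. Fix $v_0 = t$ and, for all $i$, $v_i \to_{>\varepsilon,\cR}^* \ell_{i+1}\sigma_{i+1} \to_{\varepsilon,\cR} r_{i+1}\sigma_{i+1}$ with $v_{i+1} = r_{i+1}|_{p_{i+1}}\sigma_{i+1}$ minimal non-terminating. Then replace every $\sigma_i$ simultaneously by $\sigma_i{\downarrow}$. Write $\hat v_0 = t$ and $\hat v_i = r_i|_{p_i}(\sigma_i{\downarrow})$ for $i \geq 1$. Uniqueness of normal forms of the terminating arguments of $v_i$ gives $\hat v_i \mathrel{\ito_{>\varepsilon,\cR}^*} \ell_{i+1}(\sigma_{i+1}{\downarrow}) \mathrel{\ito_{\varepsilon,\cR}} r_{i+1}(\sigma_{i+1}{\downarrow})$, and the latter contains $\hat v_{i+1}$ at position $p_{i+1}$. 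Placed in their contexts, these pieces concatenate into an infinite innermost derivation from $t$. Non-termination of $r(\sigma{\downarrow})$ then comes out as a by-product instead of being needed as an input.
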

The class of right-linear overlay TRSs is incomparable with that of locally confluent overlay TRSs.
\begin{example}
\label{ex:RLOS-LCROS-incomparable}
The TRS 
$\{
\symb{0} + y \to y, ~ 
\symb{s}(x) + y \to \symb{s}(x + y), ~
\symb{0} \times y \to \symb{0}, ~
\symb{s}(x) \times y \to (x \times y) + y
\}$ is orthogonal (and thus a locally-confluent overlay system) yet it is not right-linear.
The TRS 
$\{
\symb{a} \to \symb{b}, ~
\symb{a} \to \symb{c}
\}$ is a right-linear overlay system but not locally confluent.
\end{example}

Let $\cR$ be a TRS over a signature $\Sigma$.
We denote the set of \emph{defined symbols} and \emph{constructors} of $\cR$ by $\cD_\cR$ and $\cC_\cR$, respectively.
The tuple symbol of a defined symbol $f \in \cD_\cR$ is denoted by $f^\#$.
The set of tuple symbols for $\cD_\cR$ is denoted by $\cD^\#_\cR$.
If $t=f(t_1,\ldots,t_n)$ with $f\in\cD_\cR$, then $f^\#(t_1,\ldots,t_n)$ is denoted by $t^\#$.
For each rule $\ell\to r \in\cR$, a rewrite rule $\ell^\#\to t^\#$ is called a \emph{dependency pair} (DP, for short) of $\cR$ if $t$ is a subterm of $r$ and $root(t)\in\cD_\cR$~\cite{AG00}.
The set of DPs of $\cR$ is denoted by $\DP(\cR)$.
Note that $\DP(\cR)$ is a non-collapsing overlay TRS over the signature $\Sigma\cup\cD_\cR^\#$.
Let $\cP \subseteq \DP(\cR)$.
A sequence $s_1^\# \to t_1^\#, s_2^\# \to t_2^\#,\ldots$ %
of DPs in $\cP$ is called a \emph{dependency-pair chain} of $\cP$ w.r.t.\ $\cR$ ($(\cP,\cR)$-chain, for short) 
if there are substitutions $\sigma_1,\sigma_2,\ldots$ such that $t_i^\#\sigma_i\to_\cR^*s_{i+1}^\#\sigma_{i+1}$ for each $i > 0$.
The rewrite sequence $s_1^\#\sigma_1 \to_{\cP} t_1^\#\sigma_1 \to_\cR^* s_2^\#\sigma_2 \to_{\cP} t_2^\#\sigma_2 \to_\cR^* \cdots$ is also called a $(\cP,\cR)$-chain.
The $(\cP,\cR)$-chain above %
is called \emph{minimal} if all $t_1^\#\sigma_1,t_2^\#\sigma_2,\ldots$ are terminating w.r.t.\ $\cR$.
The $(\cP,\cR)$-chain above %
is called \emph{innermost} if $t_i^\#\sigma_i\ito_\cR^*s_{i+1}^\#\sigma_{i+1}$ and $s_{i+1}^\#\sigma_{i+1} \in \NF_\cR(\Sigma\cup\cD^\#_\cR,\cV)$ for all $i > 0$.

\begin{theorem}[\cite{AG00}]
\label{thm:DP-criteria}
Let $\cR$ be a TRS.
Then, both of the following statements hold:
\begin{itemize}
    \item $\cR$ is terminating if and only if there is no minimal $(\DP(\cR),\cR)$-chain, and
    \item $\cR$ is innermost terminating if and only if there is no minimal innermost $(\DP(\cR),\cR)$-chain.
\end{itemize}
\end{theorem}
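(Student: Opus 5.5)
The plan is to prove each of the two equivalences by contraposition in both directions. Here I read ``no minimal chain'' as ``no \emph{infinite} minimal chain''. The ``if'' directions need a construction of an infinite chain from a non-terminating term. The ``only if'' directions need a way to turn an infinite chain back into an infinite (innermost) rewrite sequence of $\cR$.

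\emph{From chains to rewrite sequences.} Let $s_1^\#\sigma_1 \to_{\cP} t_1^\#\sigma_1 \to_\cR^* s_2^\#\sigma_2 \to_\cP \cdots$ be an infinite chain, where $s_i \to r_i \in \cR$ and $t_i$ is a subterm of $r_i$.
\begin{itemize}
\item Tuple symbols occur only at the root and no rule of $\cR$ has a tuple symbol at its root. So every step $t_i^\#\sigma_i \to_\cR^* s_{i+1}^\#\sigma_{i+1}$ happens strictly below the root, and the same steps give $t_i\sigma_i \to_\cR^* s_{i+1}\sigma_{i+1}$.
\item Since $s_i\sigma_i \to_\cR r_i\sigma_i = C_i[t_i\sigma_i]$, composing these steps in growing contexts yields an infinite $\cR$-sequence starting from $s_1\sigma_1$.
\item In the innermost case, $s_{i+1}^\#\sigma_{i+1}$ is a normal form. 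Hence every proper subterm of $s_{i+1}\sigma_{i+1}$ is normal, and the root step at that subterm is an innermost step.
\item Innermost steps taken inside $t_i\sigma_i$ remain innermost inside any context. This holds because being an innermost redex depends only on the redex's own proper subterms, not on the surrounding context.
\end{itemize}

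\emph{From non-termination to chains.} Suppose $\cR$ is not terminating. I pick a minimal non-terminating term $u$, i.e.\ $u$ is non-terminating but all its proper subterms are terminating. Its root is then a defined symbol.
\begin{itemize}
\item Any infinite sequence from $u$ must contain a root step, since otherwise some argument would be non-terminating. So $u \to_{>\varepsilon,\cR}^* \ell\sigma \to_\cR r\sigma$, where $r\sigma$ is non-terminating.
\item The arguments of $\ell\sigma$ are reducts of terminating arguments, so they are terminating. Hence every $\sigma(x)$ with $x \in \Var(\ell)$ is terminating.
\item Choose a minimal non-terminating subterm of $r\sigma$. It cannot lie inside a $\sigma(x)$, so it has the form $t\sigma$ with $t$ a non-variable subterm of $r$. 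Minimality forces $root(t)\in\cD_\cR$, so $\ell^\# \to t^\#$ is a dependency pair.
\item Because the root is never rewritten before the chosen root step, we also have $u^\# \to_\cR^* \ell^\#\sigma$. Moreover $t^\#\sigma$ is terminating, since its arguments are proper subterms of the minimal non-terminating term $t\sigma$.
\item Restarting the same argument from $t\sigma$ and iterating gives an infinite minimal $(\DP(\cR),\cR)$-chain.
\end{itemize}

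\emph{Innermost variant.} Here I would take $u$ minimal non-innermost-terminating and fix an infinite innermost sequence from it. Again a root step must occur. When it does, innermostness forces all arguments, and hence all $\sigma(x)$ for $x\in\Var(\ell)$, to be normal forms. This gives $\ell^\#\sigma \in \NF_\cR$ and an innermost reduction $u^\# \ito_\cR^* \ell^\#\sigma$.

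The main obstacle is the minimality clause in the innermost case. The definition requires each $t^\#\sigma$ to be \emph{terminating}, but the construction naturally only gives \emph{innermost} termination of the arguments. I would handle this as follows.
\begin{itemize}
\item Choose $t\sigma$ minimal non-innermost-terminating among the subterms of $r\sigma$ at non-variable positions of $r$.
\item Note that $\sigma$ maps the variables to normal forms.
\item Check carefully whether the intended notion of minimality for innermost chains is innermost termination of the $t_i^\#\sigma_i$.
\item If the stronger notion (full termination) is really required, I expect this to be exactly the point needing adjustment, likely by reading ``minimal'' relative to innermost termination in the second item.
\end{itemize}
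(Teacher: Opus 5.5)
Your argument is the standard dependency-pair argument of Arts and Giesl; the paper gives no proof of its own and only cites \cite{AG00}. It is correct for the first item and for the ``only if'' half of the second. One small repair: the paper's definition of innermost chains does not require $s_1^\#\sigma_1$ to be a normal form, so the step $s_1\sigma_1 \to_\cR r_1\sigma_1$ need not be innermost. Simply start the infinite innermost sequence at $r_1\sigma_1$.

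The obstacle you flag in the ``if'' half of the second item is a genuine gap, and it cannot be closed. With the paper's definition of minimality (every $t_i^\#\sigma_i$ \emph{terminating} w.r.t.\ $\cR$), that half is false. Take $\cR = \{\symb{g}(x) \to \symb{g}(\symb{f}(\symb{a})),\ \symb{f}(\symb{a}) \to \symb{f}(\symb{a}),\ \symb{a} \to \symb{b}\}$.
\begin{itemize}
\item $\cR$ is not innermost terminating: $\symb{g}(\symb{f}(\symb{b})) \mathrel{\ito_\cR} \symb{g}(\symb{f}(\symb{a})) \mathrel{\ito_\cR} \symb{g}(\symb{f}(\symb{b})) \mathrel{\ito_\cR} \cdots$.
\item In an infinite innermost chain, the right-hand side of each pair has the same root as the left-hand side of the next pair, and that left-hand side is instantiated to a normal form. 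Now $\symb{f}^\#(\symb{a})$ is never a normal form, and no dependency pair has left-hand side $\symb{a}^\#$. So every pair must have a right-hand side rooted by $\symb{g}^\#$, i.e., it must be $\symb{g}^\#(x) \to \symb{g}^\#(\symb{f}(\symb{a}))$.
\item Then every $t_i^\#\sigma_i = \symb{g}^\#(\symb{f}(\symb{a}))$. This term is innermost terminating but not terminating, since $\symb{g}^\#(\symb{f}(\symb{a})) \to_\cR \symb{g}^\#(\symb{f}(\symb{a}))$ at position $1$.
\end{itemize}
So infinite innermost chains exist, but no infinite minimal innermost chain does. This is exactly what your construction yields here: $t\sigma = \symb{g}(\symb{f}(\symb{a}))$ has arguments that are innermost terminating but not terminating.

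Hence the repair you suggest is forced, not a matter of interpretation. The second item holds when minimality means innermost termination of each $t_i^\#\sigma_i$ (as in \cite{GTSKF06}), and with that definition your construction proves it.

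This does not endanger the paper. Its use of the theorem, in the proof of \Cref{thm:SN-iff-SIN-for-RL-overlay-TRSs}, only needs ``innermost terminating $\Rightarrow$ no infinite innermost minimal chain''; the reverse implication there, termination implies innermost termination, is trivial. That direction holds for either notion of minimality, by your first part.
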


\section{From Innermost Termination to Termination}
\label{sec:from-innermost-termination-to-termination}

In this section, for a right-linear overlay TRS $\cR$ over a signature $\Sigma$, we show the main claim:
    For a set $\cP$ of DPs such that $\cR\cup\cP$ is a right-linear overlay TRS, there is no infinite minimal $(\cP,\cR)$-chain if and only if there is no infinite innermost minimal $(\cP,\cR)$-chain (\Cref{thm:inf-DP-chain-iff-inf-innermost-DP-chain}).
The \textit{only-if} part of the main claim trivially holds because any innermost minimal $(\cP,\cR)$-chain is a minimal $(\cP,\cR)$-chain.
Thus, in the rest of this section, we focus on proving the \textit{if} part of the main claim.
Thanks to \Cref{thm:DP-criteria}, as a corollary, the main claim establishes that 
termination and innermost termination coincide for the class of right-linear overlay TRSs.

As described in \Cref{sec:intro}, we generalize our main claim so that we do not restrict chains to sequences of DPs (\Cref{thm:inf-chain-iff-inf-innermost-chain}).
To be more precise, we let $\cP$ in the main claim be a non-collapsing TRS over an extended signature of $\Sigma$.
For this generalization, we adapt the notion of chains of DPs to non-collapsing TRSs.
Hereafter, we will abuse the terminology ``chain'' because chains of DPs defined in \Cref{sec:preliminaries} are special cases of chains defined below.

Let $\cR$ be a TRS over a signature $\Sigma$, and $\cP$ be a non-collapsing TRS over an extended signature $\Sigma'$ of $\Sigma$.
We call a (possibly infinite) sequence $t_0 \to_{>\varepsilon,\cR}^* s_1 \to_{\varepsilon,\cP} t_1 \to_{>\varepsilon,\cR}^* s_2 \to_{\varepsilon,\cP} t_2 \to_{>\varepsilon,\cR}^* \cdots$ a \emph{$(\cP,\cR)$-chain}.
Note that $t_0,s_1,t_1,\ldots$ are over $\Sigma'$ and $\cR$ is considered a TRS over $\Sigma'$.
The $(\cP,\cR)$-chain is called \emph{minimal} if all proper subterms of $t_0,t_1,t_2,\ldots$ are terminating w.r.t.\ $\cR$.
Note that $t_i$ of the minimal $(\cP,\cR)$-chain may be non-terminating w.r.t.\ $\cR$.
The $(\cP,\cR)$-chain is called \emph{innermost} if 
$t_i \mathrel{\ito_{>\varepsilon,\cR}^*} s_{i+1}$
and $s_{i+1} \in \NF_\cR(\Sigma',\cV)$ for all $i \geq 0$.
Note that a finite $(\cP,\cR)$-chain is written in the form 
$t_0 \to_{>\varepsilon,\cR}^* s_1 \to_{\varepsilon,\cP} t_1 \to_{>\varepsilon,\cR}^* s_2 \to_{\varepsilon,\cP} t_2 \to_{>\varepsilon,\cR}^* \cdots \to_{>\varepsilon,\cR}^* s_n \to_{\varepsilon,\cP} t_n \to_{>\varepsilon,\cR}^* t'_n$.
Note also that a finite innermost $(\cP,\cR)$-chain is written in the form 
$t_0 \mathrel{\ito_{>\varepsilon,\cR}^!} s_1 \to_{\varepsilon,\cP} t_1 \mathrel{\ito_{>\varepsilon,\cR}^!} s_2 \to_{\varepsilon,\cP} t_2 \mathrel{\ito_{>\varepsilon,\cR}^!} \cdots \mathrel{\ito_{>\varepsilon,\cR}^!} s_n \to_{\varepsilon,\cP} t_n \mathrel{\ito_{>\varepsilon,\cR}^*} t'_n$.
where 
${\to_\cR^!} = {\to_\cR^*} \cap \{ (s,t) \mid t \in \NF_\cR(\Sigma',\cV) \}$
and
${\ito_\cR^!} = {\ito_\cR^*} \cap \{ (s,t) \mid t \in \NF_\cR(\Sigma',\cV) \}$.
The length of a finite $(\cP,\cR)$-chain is defined as the number of $\to_{\varepsilon,\cP}$-steps included.
We write $s \mathrel{\to_{>\varepsilon,\cR}^!} t$ (resp.\ $s \mathrel{\ito_{>\varepsilon,\cR}^!} t$) if 
either $s$ is a variable or 
$s = f(s_1,\ldots,s_n) \mathrel{\to_\cR^*} f(t_1,\ldots,t_n) = t$
and 
$s_i \mathrel{\to_\cR^!} t_i$ (resp.\ $s_i \mathrel{\ito_\cR^!} t_i$) for any $i \in \{1,\ldots,n\}$.

In proving the \textit{if} part of the generalized main claim, innermost termination of $\cR$ is not assumed, while all proper subterms of $t_0,t_1,\ldots$ of the $(\cP,\cR)$-chain $t_0 \to_{>\varepsilon,\cR}^* s_1 \to_{\varepsilon,\cP} t_1 \to_{>\varepsilon,\cR}^* s_2 \to_{\varepsilon,\cP} t_2 \to_{>\varepsilon,\cR}^* \cdots$ are assumed to be terminating w.r.t.\ $\cR$.
For this reason, referring to an auxiliary lemma in~\cite[Lemma~3.4]{SOS03}, we prepare the following lemma without assuming either termination or innermost termination of $\cR$.

\begin{restatable}{lemma}{LemOlRlIsnImpliesSn}
\label{lem:OL-RL-ISN-implies-SN}
Let $\cR$ be a right-linear overlay TRS over a signature $\Sigma$, 
$s$ be a linear term in $T(\Sigma,\cV)$,
$t$ be a normal form of $\cR$ over $\Sigma$ (i.e., $t \in \NF_\cR(\Sigma,\cV)$),
and
$\sigma$ be a substitution
such that $s\sigma$ is terminating w.r.t.\ $\cR$. %
If $s\sigma \mathrel{\to_\cR^!} t$, 
then there exists a substitution $\sigma'$ such that
\begin{itemize}
    \item
    $\Dom(\sigma') = \Dom(\sigma|_{\Var(s)})$,
    \item 
    $x\sigma \mathrel{\to_\cR^*} x\sigma' \in \NF_\cR(\Sigma,\cV)$ for all variables $x \in \Var(s)$
    (i.e., $s\sigma \mathrel{\to_\cR^*} s\sigma'$),
        and
    \item 
    $s\sigma' \mathrel{\ito_\cR^!} t$.
\end{itemize}
\end{restatable}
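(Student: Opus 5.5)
We argue by well-founded induction on $s\sigma$ with respect to $\to_\cR^+$, which is well-founded on the reducts of $s\sigma$ because $s\sigma$ is terminating; inside it we use a secondary induction on the size of $s$. We fix a reduction $s\sigma = u_0 \to_\cR u_1 \to_\cR \cdots \to_\cR u_k = t$. The case $k=0$ is immediate: then $s\sigma$ is a normal form, so every $x\sigma$ with $x\in\Var(s)$ is a normal form, and we take $\sigma'=\sigma|_{\Var(s)}$. If $s$ is a variable $x$, we take $x\sigma' = t$, and $t \mathrel{\ito_\cR^!} t$ holds trivially. So from now on assume $s=f(s_1,\ldots,s_n)$ and $k>0$.

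\emph{Case 1: no step of the reduction is at the root.} Then $t=f(t_1,\ldots,t_n)$ and $s_i\sigma \mathrel{\to_\cR^!} t_i$ for each $i$. Because $s$ is linear, the sets $\Var(s_i)$ are pairwise disjoint. Each $s_i\sigma$ is a proper subterm of $s\sigma$ and hence terminating, so the induction hypothesis (on the size of $s$) yields substitutions $\sigma_i'$ on $\Var(s_i)$ with $s_i\sigma_i' \mathrel{\ito_\cR^!} t_i$. By disjointness these glue into a single $\sigma'$. The innermost reductions of the arguments can then be performed one after another inside the context $f(\ldots)$, and each step stays innermost.

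\emph{Case 2: some step is at the root.} Consider the first root step $u_j = \ell\theta \to r\theta = u_{j+1}$ with $j<k$. All steps before it are below the root, so $u_j=f(u_{j,1},\ldots,u_{j,n})$ with $s_i\sigma\to^* u_{j,i}$. Since $\cR$ is an overlay system, no proper non-variable subterm of $\ell$ can be rewritten in a way that interacts with the pattern of $\ell$. The key claim is that the arguments $s_i\sigma$ can first be normalized, giving normal forms $v_i$, and that $f(v_1,\ldots,v_n)$ is still an instance $\ell\theta'$ of $\ell$ with $x\theta \to^* x\theta'$. Moreover, since $r$ is linear, $r\theta' $ remains a reduct along which $t$ is still reachable, i.e.\ $r\theta \to^* \ldots$ can be mirrored from $r\theta'$ to reach $t$.

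Establishing this claim is the \textbf{main obstacle}, and it is essentially~\cite[Lemma~3.4]{SOS03}. Right-linearity is what ensures that steps performed inside $\theta$-instances on the right-hand side do not get duplicated, so the rest of the reduction $r\theta\to^! t$ can be replayed from $r\theta'$, with the reductions of the variable instances either absorbed or done in parallel. We would attempt it as follows. First apply Case 1 and the induction hypothesis to the linear term $\ell$ together with the arguments. Then use the outer induction hypothesis on $r\theta'$, which is a proper reduct of $s\sigma$ and therefore admissible, with the linear term $r$ and the substitution $\theta'$. This gives $\theta''$ with $r\theta'' \mathrel{\ito_\cR^!} t$. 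We then need to check that $\ell\theta''$ is reachable innermost: the variables instantiated by $\theta''$ are normal forms obtained from $\theta'$, which requires a confluence-free replaying argument, and the overlay property guarantees that $\ell\theta''$ with normal-form arguments is an innermost redex.

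Finally we collect everything: $s\sigma' \mathrel{\ito^*} f(v_1,\ldots,v_n)=\ell\theta' \mathrel{\ito} r\theta' \cdots \mathrel{\ito^!} t$. Here $\sigma'$ is taken from the argument-level applications of the induction hypothesis, chosen so that $s\sigma' \mathrel{\ito^*}$ produces exactly the normal forms needed for the pattern of $\ell$. The conditions on the domain and on the normal forms of $x\sigma'$ follow directly from this construction.
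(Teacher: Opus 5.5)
\textbf{There is a genuine gap in your Case~2, exactly where you locate the ``main obstacle''.}

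You propose to normalize the arguments of $\ell\theta$ first, obtaining $\ell\theta'$, and then assert that $t$ is still reachable from $r\theta'$, so that the induction hypothesis can be applied to $r\theta'$. Without confluence this fails for an arbitrary choice of normal forms. Take $\cR=\{\symb{a}\to\symb{b},\ \symb{a}\to\symb{c},\ \symb{f}(x)\to\symb{g}(x)\}$, which is right-linear and overlay, with $s=\symb{f}(y)$, $\sigma=\{y\mapsto\symb{a}\}$ and $t=\symb{g}(\symb{b})$. Normalizing the argument to $\symb{c}$ gives $r\theta'=\symb{g}(\symb{c})$, from which $t$ is unreachable. Nothing in ``the pattern of $\ell$'' tells you which normal form to pick.

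Your first sub-step also breaks. Only right-linearity is assumed, so $\ell$ need not be linear, and there is no ``linear term $\ell$'' to feed to the induction hypothesis. Moreover, $\ell\theta$ has no prescribed normal-form target. The ``confluence-free replaying argument'' you say is needed is never supplied. Citing \cite[Lemma~3.4]{SOS03} does not fill the hole, because that lemma assumes $\cR$ is terminating, which is precisely the assumption this statement drops.

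\textbf{The missing idea is to reverse the order.}
\begin{enumerate}
\item Apply the induction hypothesis first to the given reduction $r\theta\to_\cR^! t$, with the linear term $r$ (this is where right-linearity is used) and the original $\theta$. It yields $\theta'_r$ on $\Var(r)$ with $x\theta\to_\cR^* x\theta'_r\in\NF_\cR(\Sigma,\cV)$ and $r\theta'_r \mathrel{\ito_\cR^!} t$. So the right-hand side dictates the normal forms.
\item Normalize only the erased variables in $\Var(\ell)\setminus\Var(r)$ arbitrarily; normal forms exist because $s\sigma$ is terminating.
\item For the combined $\theta'$, the overlay property gives $\ell_i\theta'\in\NF_\cR(\Sigma,\cV)$, where $\ell=f(\ell_1,\ldots,\ell_n)$. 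Hence $\ell\theta'$ is an innermost redex, and $s_i\sigma\to_\cR^*\ell_i\theta\to_\cR^*\ell_i\theta'$ is a normalizing reduction.
\item Apply the induction hypothesis to the subterms $s_i\sigma$ with targets $\ell_i\theta'$ to get substitutions $\sigma'_i$. These glue by linearity of $s$; linearity of $\ell$ is never needed.
\end{enumerate}

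\textbf{Your induction scheme also needs repair.} In both cases you invoke the hypothesis on $s_i\sigma$, which is not a $\to_\cR^+$-reduct of $s\sigma$. A lexicographic order ``$\to_\cR^+$ on $s\sigma$, then size of $s$'' therefore does not license that call. The paper instead inducts on $(\to_\cR\cup\rhd)^+$ restricted to terms reachable from the terminating term $s\sigma$. That relation is well founded and covers both the subterm calls and the call on $r\theta$.
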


\noindent
The proof of \Cref{lem:OL-RL-ISN-implies-SN} is provided in \Cref{sec:LemOlRlIsnImpliesSn}.
The main difference between \Cref{lem:OL-RL-ISN-implies-SN} and~\cite[Lemma~3.4]{SOS03} is that 
the latter assumes termination of $\cR$, the former does not, and assumes termination of $s\sigma$.

A variant of \Cref{thm:SOS03-main} such that termination of a term $s$ is assumed instead of termination of a TRS $\cR$ is obtained from \Cref{lem:OL-RL-ISN-implies-SN}.

\begin{restatable}{lemma}{LemReductionSimulatedbyInnermostOne}%
\label{lem:reduction-simulated-by-innermost-one}
Let $\cR$ be a right-linear overlay TRS over a signature $\Sigma$, $s$ be a terminating term in $T(\Sigma,\cV)$, and $t$ be a normal form of $\cR$ over $\Sigma$ (i.e., $t \in \NF_\cR(\Sigma,\cV)$).
If $s \mathrel{\to_\cR^!} t$, %
then $s \mathrel{\ito_\cR^!} t$.
\end{restatable}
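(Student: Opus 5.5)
The plan is to derive the statement directly from \Cref{lem:OL-RL-ISN-implies-SN} by viewing $s$ as an instance of a linear term. Since $s$ itself need not be linear, we linearize it completely: take a fresh variable $x$ and the trivially linear term $x$, with the substitution $\sigma = \{x \mapsto s\}$. Then $x\sigma = s$ is terminating, $x$ is linear, and $x\sigma \mathrel{\to_\cR^!} t$ with $t \in \NF_\cR(\Sigma,\cV)$. So every hypothesis of \Cref{lem:OL-RL-ISN-implies-SN} holds, and we get a substitution $\sigma'$ with $x\sigma \mathrel{\to_\cR^*} x\sigma' \in \NF_\cR(\Sigma,\cV)$ and $x\sigma' \mathrel{\ito_\cR^!} t$.

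The catch is that this instantiation is degenerate. The lemma only promises that $s$ reduces (by arbitrary steps) to the normal form $x\sigma'$, and that $x\sigma' \mathrel{\ito_\cR^!} t$. Because $x\sigma'$ is already a normal form, the latter forces $x\sigma' = t$, so nothing is gained. The real plan therefore uses a nontrivial linearization instead. We write $s = C[u_1,\ldots,u_k]$, where $C$ is the linear term obtained by replacing every variable occurrence of $s$ by a distinct fresh variable, and $\sigma$ maps these fresh variables back to the original variables. This still does not help, because the substitution part consists of normal forms.

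The honest route is therefore an induction on $s$ with respect to the well-founded order ${\to_\cR^+} \cup {\rhd}$ restricted to terms reachable from $s$. This order is well founded because $s$ is terminating. The induction reproduces the argument of \cite{SOS03} for \Cref{thm:SOS03-main} and invokes \Cref{lem:OL-RL-ISN-implies-SN} at the root step. Concretely, if the reduction $s \mathrel{\to_\cR^!} t$ has no root step, then $s = f(s_1,\ldots,s_n)$ and $s_i \mathrel{\to_\cR^!} t_i$ with $t = f(t_1,\ldots,t_n)$. Each $s_i$ is a terminating proper subterm, so the induction hypothesis gives $s_i \mathrel{\ito_\cR^!} t_i$, and we combine these in context. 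Otherwise, consider the first root step $s \to_{>\varepsilon,\cR}^* \ell\theta \to_{\varepsilon,\cR} r\theta \to_\cR^* t$. The term $r\theta$ is a proper reduct of $s$, so the induction hypothesis gives $r\theta \mathrel{\ito_\cR^!} t$. For the prefix, we apply \Cref{lem:OL-RL-ISN-implies-SN} in the direction from $s$ toward a normalized instance of $\ell$. Using the overlay property, the steps below the root reaching $\ell\theta$ can be replaced by innermost normalization of the arguments of $s$, giving $s \mathrel{\ito_\cR^*} \ell\theta'$ with $\theta'$ normalized and $x\theta \to_\cR^* x\theta'$. Then $\ell\theta' \mathrel{\ito_{\varepsilon,\cR}} r\theta'$, which is innermost since $\ell$ has no inner overlaps and $\theta'$ is normal. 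Finally, \Cref{lem:OL-RL-ISN-implies-SN} is used with the linear term $r$ (this is where right-linearity enters) to transfer $r\theta \mathrel{\to_\cR^!} t$ into $r\theta' \mathrel{\ito_\cR^!} t$, because $r\theta \to_\cR^* r\theta'$ and confluence is not needed.

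The main obstacle is this last transfer. Knowing $r\theta \mathrel{\to_\cR^!} t$ does not immediately give $r\theta' \mathrel{\to_\cR^!} t$. The key fact I expect to need is the commutation property underlying \cite[Lemma~3.4]{SOS03}: for linear $r$, reductions inside substitution parts can be postponed or absorbed, so that $r\theta \to^! t$ yields some normalized $\theta''$ with $r\theta'' \mathrel{\ito^!} t$. We then need to argue, via the overlay property and termination of $s$, that the innermost normalization of the arguments of $\ell\theta$ can be chosen to produce exactly that $\theta''$. I would first try to strengthen the induction statement to quantify over all normal forms reachable from the argument subterms.
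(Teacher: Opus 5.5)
There is a genuine gap, and it lies exactly where you discarded the right idea. Your second linearization is precisely the paper's proof, and it works. In it you replace the variable occurrences of $s$ by pairwise distinct fresh variables $y_1,\ldots,y_m$ to obtain a linear $s'$, and let $\sigma=\{y_1\mapsto x_1,\ldots,y_m\mapsto x_m\}$ so that $s'\sigma=s$. The fact that the substitution part consists of normal forms is not the obstacle but the whole point. \Cref{lem:OL-RL-ISN-implies-SN} applies, since $s'$ is linear, $s'\sigma=s$ is terminating, and $s'\sigma \mathrel{\to_\cR^!} t$. It yields $\sigma'$ with $y_i\sigma \mathrel{\to_\cR^*} y_i\sigma'$. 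Because $y_i\sigma=x_i$ is a variable, hence a normal form, this forces $y_i\sigma'=y_i\sigma$. Hence $s'\sigma'=s'\sigma=s$, and the third conclusion $s'\sigma' \mathrel{\ito_\cR^!} t$ is literally $s \mathrel{\ito_\cR^!} t$. The content of the lemma is in that third bullet. In your degenerate instance $\{x\mapsto s\}$ the whole term sits in the substitution part, so the bullet is vacuous (you were right there). In the full linearization the whole term sits in the linear skeleton, and the bullet is exactly the claim.

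Your fallback induction, as written, does not close. The step that uses \Cref{lem:OL-RL-ISN-implies-SN} ``to transfer $r\theta \mathrel{\to_\cR^!} t$ into $r\theta' \mathrel{\ito_\cR^!} t$, because $r\theta \mathrel{\to_\cR^*} r\theta'$'' misreads the lemma. The lemma supplies its own normalized substitution, and for a $\theta'$ fixed beforehand by normalizing the arguments, $r\theta'$ need not reach $t$ at all. For instance, take $\cR=\{\symb{g}(x)\to\symb{h}(x),\ \symb{a}\to\symb{b},\ \symb{a}\to\symb{c}\}$, $s=\symb{g}(\symb{a})$ and $t=\symb{h}(\symb{b})$; normalizing the argument to $\symb{c}$ gives $r\theta'=\symb{h}(\symb{c})\neq t$. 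You flag this as the main obstacle but leave it open. No strengthening of the induction statement is needed; the order of choices just has to be reversed, as in the appendix proof of \Cref{auxlem:OL-RL-ISN-implies-SN}:
\begin{enumerate}
\item First obtain from the lemma a normalized $\theta''$ on $\Var(r)$ with $r\theta'' \mathrel{\ito_\cR^!} t$.
\item Extend it by arbitrary normal forms on $\Var(\ell)\setminus\Var(r)$ to get $\theta'$.
\item Note that each $\ell_i\theta'$ is a normal form by the overlay property.
\item Apply the induction hypothesis to $s_i \mathrel{\to_\cR^*} \ell_i\theta \mathrel{\to_\cR^*} \ell_i\theta'$; this is legitimate because the hypothesis holds for any normal-form target.
\end{enumerate}
With the linearization argument above, however, this entire induction is unnecessary.
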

\begin{proof}
Let $\Pos_\cV(s) = \{p_1,\ldots,p_m\}$.
Then, $s$ is of the form $s[x_1,\ldots,x_m]_{p_1,\ldots,p_m}$.
Note that $\Var(s) = \{x_1,\ldots,x_m\}$.%
\footnote{There may be positions $p_i,p_j$ such that $i\ne j$, $s|_{p_i}=x_i=x_j=s|_{p_j}$.}
Let $y_1,\ldots,y_m$ be pairwise distinct variables.
Then, the term $s[y_1,\ldots,y_m]_{p_1,\ldots,p_m}$ is linear.
Let $s'=s[y_1,\ldots,y_m]_{p_1,\ldots,p_m}$ and $\sigma=\{ y_1 \mapsto x_1, \ldots, y_m \mapsto x_m\}$.
Then, we have that 
    $s'\sigma = s$,
        and
    $y_i\sigma = x_i \in \cV \subseteq \NF_\cR(\Sigma,\cV)$ for all $1 \leq i \leq m$.
Thus, we have that $s'\sigma \mathrel{\to_\cR^!} t$.
It follows from \Cref{{lem:OL-RL-ISN-implies-SN}} that there exists a substitution $\sigma'$ such that
\begin{itemize}
    \item $\Dom(\sigma') = \Dom(\sigma)$,
    \item $x\sigma \mathrel{\to_\cR^*} x\sigma' \in \NF_\cR(\Sigma,\cV)$ for all variables $x \in \Var(s')$,
        and
    \item $s'\sigma' \mathrel{\ito_\cR^!} t$.
\end{itemize}
For any variable $x \in \Dom(\sigma)$,
since $x\sigma \in \NF_\cR(\Sigma,\cV)$, we have that $x\sigma=x\sigma'$.
Therefore, we have that 
$s = s'\sigma = s'\sigma' \mathrel{\ito_\cR^!} t$.
\end{proof}

Next, we show an auxiliary lemma, which is a key property to prove our generalized main claim (\Cref{thm:inf-chain-iff-inf-innermost-chain} below).

\begin{restatable}{lemma}{LemExistsTheSameLengthInnermostChain}
\label{lem:exists-the-same-length-innermost-chain}
Let $\cR$ be a TRS over a signature $\Sigma$,
$\cP$ be a non-collapsing TRS over an extended signature $\Sigma'$ of $\Sigma$,
$t_0$ be a term in $T(\Sigma',\cV)$,
$u$ be a non-variable term in $T(\Sigma',\cV)$,
$\sigma,\sigma'$ be substitutions,
and
$t_0 \mathrel{(\to_{>\varepsilon,\cR}^*\cdot\to_{\varepsilon,\cP})^n} u\sigma 
\mathrel{\to_{>\varepsilon,\cR}^*}
u\sigma'$ be a finite minimal $(\cP,\cR)$-chain such that
\begin{itemize}
    \item $x\sigma \mathrel{\to_\cR^*} x\sigma' \in \NF_\cR(\Sigma',\cV)$ for all variables $x \in \Var(u)$,
        and
    \item all proper subterms of $u\sigma'$ are normal forms of $\cR$ over $\Sigma'$.
\end{itemize}
Suppose that $\cR\cup\cP$ is a right-linear overlay system.
Then, there exists an innermost minimal $(\cP,\cR)$-chain 
$t_0 \mathrel{(\ito_{>\varepsilon,\cR}^!\cdot\ito_{\varepsilon,\cP})^n\cdot\ito_{>\varepsilon,\cR}^*} u\sigma'$.
\end{restatable}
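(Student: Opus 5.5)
We prove the lemma by induction on $n$, working backwards from the last step. The key tool is \Cref{lem:OL-RL-ISN-implies-SN}, applied to right-hand sides of $\cP$-rules and to the arguments of $u$. The hypothesis that $\cR\cup\cP$ is right-linear makes every right-hand side $r$ of a $\cP$-rule linear, and every argument of $u$ will be taken linear by the same linearization trick as in the proof of \Cref{lem:reduction-simulated-by-innermost-one}. Minimality guarantees that the relevant terms are terminating, which is exactly the hypothesis \Cref{lem:OL-RL-ISN-implies-SN} needs.

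\emph{Base case $n=0$.} Here $t_0 = u\sigma \to_{>\varepsilon,\cR}^* u\sigma'$ with $u=f(u_1,\ldots,u_k)$, which is non-variable. By minimality each $u_i\sigma$ is terminating, and $u_i\sigma \to_\cR^! u_i\sigma'$ since all proper subterms of $u\sigma'$ are normal forms. \Cref{lem:reduction-simulated-by-innermost-one} applied argumentwise gives $u_i\sigma \ito_\cR^! u_i\sigma'$. Hence $t_0 \ito_{>\varepsilon,\cR}^* u\sigma'$.

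\emph{Inductive step.} Write the chain as
\[ t_0 \mathrel{(\to_{>\varepsilon,\cR}^*\cdot\to_{\varepsilon,\cP})^{n-1}} \ell\theta \to_{\varepsilon,\cP} r\theta = u\sigma \to_{>\varepsilon,\cR}^* u\sigma' . \]
First I would handle the last segment $r\theta \to_{>\varepsilon,\cR}^* u\sigma'$. Since $r$ is non-variable and linear, each argument $r_i$ is linear and $r_i\theta$ is terminating by minimality. Moreover $r_i\theta \to_\cR^! v_i$, where $v_i$ is the $i$-th argument of $u\sigma'$ and is a normal form.

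\Cref{lem:OL-RL-ISN-implies-SN} then yields a substitution $\theta'$ on $\Var(r)$ with two properties. First, $x\theta \to_\cR^* x\theta' \in \NF_\cR$. Second, $r_i\theta' \ito_\cR^! v_i$, so $r\theta' \ito_{>\varepsilon,\cR}^* u\sigma'$.

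Next I would extend $\theta'$ to the variables of $\ell$ that do not occur in $r$ by normalizing them. Such normal forms exist because they are subterms of the terminating proper subterms of $t_{n-1}$. This step must be done carefully, because termination of $\ell\theta$'s arguments is not given directly by minimality.

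\emph{Obstacle.} The main difficulty is here. We need $\ell\theta \to_{>\varepsilon,\cR}^* \ell\theta'$, and we need that no rewrite step reaching $\ell\theta$ from the previous term is lost. The intended route is to set $u:=\ell$, $\sigma:=\theta$, $\sigma':=\theta'$ and apply the induction hypothesis to
\[ t_0 \mathrel{(\to_{>\varepsilon,\cR}^*\cdot\to_{\varepsilon,\cP})^{n-1}} \cdot \to_{>\varepsilon,\cR}^* \ell\theta \to_{>\varepsilon,\cR}^* \ell\theta' . \]
Two gaps appear. First, the previous $\cR$-segment ends at some term $\ell\theta$, but the IH is phrased with the $\cP$-step right-hand side $u\sigma$ in front. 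I would reformulate the induction so that the chain ends at $w \to_{>\varepsilon,\cR}^* u\sigma'$, and decompose $w$ as the previous $r'\theta''$.

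Second, the IH requires all proper subterms of $\ell\theta'$ to be normal forms. Because $\cR\cup\cP$ is an overlay system, no proper non-variable subterm of $\ell$ unifies with an $\cR$-left-hand side, so $\ell\theta'$ with normal-form $x\theta'$ has normal-form proper subterms. The same overlay property makes $\ell\theta' \to_{\varepsilon,\cP} r\theta'$ an innermost step.

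Once both gaps are closed, the IH gives the innermost prefix ending in $\ell\theta'$, and we append $\ell\theta' \ito_{\varepsilon,\cP} r\theta' \ito_{>\varepsilon,\cR}^* u\sigma'$. Minimality of the new chain is checked as follows. Its proper subterms are reducts of terminating terms, since $x\theta \to_\cR^* x\theta'$, so they are terminating as well.
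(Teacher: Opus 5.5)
Your proposal is correct and takes essentially the same route as the paper. Both argue by induction on $n$, apply \Cref{lem:OL-RL-ISN-implies-SN} to the linear arguments of the last right-hand side $r$, normalize the variables in $\Var(\ell)\setminus\Var(r)$, use the overlay property to make the $\cP$-step innermost, and invoke the induction hypothesis on the length-$(n-1)$ prefix. The shape mismatch you flag is real and glossed over in the paper, but no reformulation is needed. The original statement already applies with $u := g(y_1,\ldots,y_k)$, where $g$ is the root symbol of $t_{n-1}$ and the $y_j$ are fresh, and $\sigma,\sigma'$ send $y_j$ to the $j$-th argument of $t_{n-1}$ and of $\ell\theta'$, respectively.
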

\begin{proof}
We prove this claim by induction on $n$.

We first consider the case where $n=0$.
We have that $t_0 = u\sigma$, and thus $t_0 = u\sigma \mathrel{\to_{>\varepsilon,\cR}^*} u\sigma'$.
It follows from \Cref{lem:reduction-simulated-by-innermost-one} that $x\sigma \mathrel{\ito_\cR^*} x\sigma'$ for all variables $x \in \Var(u)$,
and thus we have that $t_0=u\sigma \mathrel{\ito_{>\varepsilon,\cR}^*} u\sigma'$, which is an innermost minimal $(\cP,\cR)$-chain of length~$0$.

Next, we consider the remaining case where $n>0$.
The overview of the proof for this case can be seen in \Cref{fig:overview-of-main-case-of-Main-Lemma}.
Suppose that
\[
t_0 \mathrel{(\to_{>\varepsilon,\cR}^*\cdot\to_{\varepsilon,\cP})^{n-1}} 
\cdot
\mathrel{\to_{>\varepsilon,\cR}^*}
\ell\delta
\mathrel{\to_{\varepsilon,\cP}}
r\delta
=
u\sigma
\mathrel{\to_{>\varepsilon,\cR}^*}
u\sigma'
\]
where $\ell \to r \in \cP$.
Since $\cP$ is non-collapsing, $r$ is not a variable.
By assumption, $u$ is not a variable.
Let $r = f(r_1,\ldots,r_m)$ and $u = f(u_1,\ldots,u_m)$.
Since the chain is minimal, all $r_1\delta,\ldots,r_m\delta$ are terminating w.r.t.\ $\cR$.
By assumption, all $u_1\sigma',\ldots,u_m\sigma'$ are normal forms of $\cR$.
It follows from \Cref{lem:OL-RL-ISN-implies-SN} that
for each $i \in \{1,\ldots,m\}$, there exists a 
substitution $\delta_i$ such that 
\begin{itemize}
    \item $\Dom(\delta_i) = \Dom(\delta|_{\Var(r_i)})$, 
    \item $x\delta \to_\cR^* x\delta_i \in \NF_\cR(\Sigma',\cV)$ for all variables $x \in \Var(r_i)$,
    and
    \item $r_i\delta_i \mathrel{\ito_\cR^!} u_i\sigma'$,
\end{itemize}
and thus $r_i\delta \to_\cR^* r_i\delta_i$ for all $i \in \{1,\ldots,m\}$.
Since $\cP$ is right-linear, $r$ is linear, and thus all $r_1,\ldots,r_m$ are linear and $\Var(r_i) \cap \Var(r_j) = \emptyset$ for each $i,j \in \{1,\ldots,m\}$ with $i \ne j$.
Let $\delta'=\bigcup_{i=1}^m \delta_i$.
Then, $\delta'$ is a substitution and we have that
\[
r\delta = f(r_1\delta,\ldots,r_m\delta) \mathrel{\to_{>\varepsilon,\cR}^*} 
f(r_1\delta',\ldots,r_m\delta') 
=
r\delta'
\mathrel{\ito_{>\varepsilon,\cR}^*}
f(u_1\sigma',\ldots,u_m\sigma') 
= u\sigma'
\]

We now consider the variables in $\ell$, which does not appear in $r$.
Let $x$ be one of such variables.
Since the chain is minimal, $x\delta$ has a normal form.
Let $u_x$ denote an arbitrary normal form of $x\delta$, i.e., $x\delta \mathrel{\to_\cR^!} u_x \in \NF_\cR(\Sigma',\cV)$.
Let $\delta'' = \delta'\cup\{ x \mapsto u_x \mid x \in \Var(\ell)\setminus\Var(r), x\delta \mathrel{\to_\cR^!} u_x \}$.
Then, we have that $\ell\delta \mathrel{\to_\cR^*} \ell\delta''$.
Since $\cP$ is a TRS, $\ell$ is not a variable.
Hence, we have that $\ell\delta \mathrel{\to_{>\varepsilon,\cR}^*} \ell\delta'' \to_{\varepsilon,\cP} r\delta'' = r\delta'$, and thus
\[
t_0 \mathrel{(\to_{>\varepsilon,\cR}^*\cdot\to_{\varepsilon,\cP})^{n-1}} 
\cdot
\mathrel{\to_{>\varepsilon,\cR}^*}
\ell\delta
\mathrel{\to_{>\varepsilon,\cR}^*} \ell\delta'' 
\]
such that $x\delta \mathrel{\to_\cR^!} x\delta''$ for all variables $x \in \Var(\ell)$.
Since $\cR\cup\cP$ is an overlay system, all proper subterms of $\ell\delta''$ are normal forms of $\cR$.%
\footnote{
Suppose that there exists a proper subterm of $\ell\delta''$, which is not a normal form of $\cR$.
Then, $\ell$ has a proper subterm $\ell'$ such that $\ell'\delta''$ is a redex of $\cR$.
Since $x\delta'' \in \NF_\cR(\Sigma',\cV)$ for all variables $x \in \Var(\ell)$, $\ell'$ is not a variable.
There exists a rule $\ell'' \to r'' \in \cR$ such that $\ell'\delta'' = \ell''\theta$ for some substitution $\theta$.
This contradicts the assumption that $\cR\cup\cP$ is an overlay system.
}
Since the above sequence is a minimal $(\cP,\cR)$-chain of length $n-1$,
by the induction hypothesis, 
we have that 
$t_0 \mathrel{(\ito_{>\varepsilon,\cR}^!\cdot\ito_{\varepsilon,\cP})^{n-1}} 
\cdot
\mathrel{\ito_{>\varepsilon,\cR}^!} 
\ell\delta'' 
$, and thus we have the following innermost minimal $(\cP,\cR)$-chain of length $n$:
\[
t_0 \mathrel{(\ito_{>\varepsilon,\cR}^!\cdot\ito_{\varepsilon,\cP})^{n-1}} 
\cdot
\mathrel{\ito_{>\varepsilon,\cR}^!} 
\ell\delta'' 
\mathrel{\ito_{\varepsilon,\cP}}
r\delta'' = r\delta'
\mathrel{\ito_{>\varepsilon,\cR}^*}
u\sigma'
\]
\end{proof}

\begin{figure}[t]
\hfil
\xymatrix@R=28pt@C=6pt{
t_0 \ar@{}[rrrrrrrrrrr]|{\mbox{$(\to_{>\varepsilon,\cR}^*\cdot\to_{\varepsilon,\cP})^{n-1} \cdot {\to_{>\varepsilon,\cR}^*}$}}
\ar@{-->}^{\mathsf{i}}@/_12pt/[drrrrrrrrrrr]
\ar@{}[drrrrrrrrrrr]|(.6){\mbox{\small I.H.}}
&&&&&&&&&&& \ell\delta \ar[rr]_(.6){\varepsilon,\cP} \ar@{..>}[d]_{*}^(.7){>\varepsilon,\cR}
&& r\delta \ar@{}[r]|{\mbox{$=$}} \ar@{..>}[d]_\ast^(.7){>\varepsilon,\cR}
& u\sigma \ar[rrrr]^\ast_(.65){>\varepsilon,\cR} 
&&&& u\sigma'
\\
&&&&&&&&&&& \ell\delta'' \ar[rr]^{\mathsf{i}}_(.6){\varepsilon,\cP}
&& r\delta'' \ar@{}[r]|{\mbox{$=$}}
& r\delta'
\ar@{..>}@/_15pt/[urrrr]^(.4){\mathsf{i}}|(.5){\raisebox{8pt}{\scriptsize !}}_(.8){>\varepsilon,\cR} \ar@{}[urrrr]|{\mbox{\raisebox{5pt}{\small \Cref{lem:OL-RL-ISN-implies-SN}~~}}}
\\
}
\caption{An overview of the proof for the case where $n>0$ in \Cref{lem:exists-the-same-length-innermost-chain}, where solid arrows represent assumptions, dotted arrows represent consequences, and the dashed arrow with $\mathsf{i}$ represents
$(\ito_{>\varepsilon,\cR}^!\cdot\ito_{\varepsilon,\cP})^{n-1} \cdot {\ito_{>\varepsilon,\cR}^!}$ for consequences.}
\label{fig:overview-of-main-case-of-Main-Lemma}
\end{figure}

The following example shows that without assuming $\cP$ non-collapsing, \Cref{lem:exists-the-same-length-innermost-chain} does not hold.
\begin{example}
\label{ex:counter-ex-of-wo-non-collapsing}
Let us consider the following two TRSs $\cPfgh, \cRfgh$:
\[
\cPfgh = 
\{~
\symb{f}(x) \to x, ~~
\symb{g}(x) \to \symb{f}(\symb{g}(x))
~\}
\qquad
\cRfgh =
\{~
\symb{g}(x) \to \symb{h}(x)
~\}
\]
The first rule of $\cPfgh$ is collapsing and $\cPfgh\cup\cRfgh$ is a right-linear overlay TRS.
We have the minimal $(\cPfgh,\cRfgh)$-chain
$\symb{f}(\symb{g}(x)) \to_{\varepsilon,\cPfgh} \symb{g}(x) \to_{\varepsilon,\cPfgh} \symb{f}(\symb{g}(x))$ of length $2$.
On the other hand, 
there is no innermost minimal $(\cPfgh,\cRfgh)$-chain of length $2$, which starts with $\symb{f}(\symb{g}(x))$;
the longest innermost minimal $(\cPfgh,\cRfgh)$-chain starting with $\symb{f}(\symb{g}(x))$ is
$\symb{f}(\symb{g}(x)) \mathrel{\ito_{>\varepsilon,\cRfgh}} \symb{f}(\symb{h}(x)) \mathrel{\ito_{\varepsilon,\cPfgh}} \symb{h}(x)$, which is of length $1$.
\end{example}

\Cref{lem:exists-the-same-length-innermost-chain} implies the generalized main claim.

\begin{theorem}
\label{thm:inf-chain-iff-inf-innermost-chain}
Let $\cR$ be a TRS over a signature $\Sigma$, and
$\cP$ be a non-collapsing TRS over an extended signature of $\Sigma$
such that $\cR\cup\cP$ is a right-linear overlay system.
Then,
there exists no infinite minimal $(\cP,\cR)$-chain if and only if there exists no infinite innermost minimal $(\cP,\cR)$-chain.
\end{theorem}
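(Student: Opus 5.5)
The only-if direction is immediate: an infinite innermost minimal $(\cP,\cR)$-chain is in particular an infinite minimal $(\cP,\cR)$-chain. For the if direction I argue by contradiction, following the outline in \Cref{sec:intro}. Assume there is no infinite innermost minimal $(\cP,\cR)$-chain, but there is an infinite minimal chain
\[
t_0 \to_{>\varepsilon,\cR}^* s_1 \to_{\varepsilon,\cP} t_1 \to_{>\varepsilon,\cR}^* s_2 \to_{\varepsilon,\cP} \cdots .
\]
The plan is to show that for every $n$ there is an innermost minimal chain of length $n$ starting from a fixed term. Then I need a bound on those lengths to get a contradiction.

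\textbf{Step 1: truncate to finite chains fitting \Cref{lem:exists-the-same-length-innermost-chain}.} Fix $n$ and take the prefix ending at $s_{n+1}$. Write $s_{n+1} = \ell\delta$ with $\ell\to r\in\cP$. The term $\ell$ is not a variable, so $s_{n+1} = f(v_1,\ldots,v_k)$. The $v_j$ are reducts of proper subterms of $t_n$, which are terminating by minimality, so each $v_j$ has a normal form $w_j$. Set $u := f(y_1,\ldots,y_k)$ with fresh pairwise distinct variables, $\sigma := \{y_j\mapsto v_j\}$ and $\sigma' := \{y_j\mapsto w_j\}$. Then $u$ is non-variable, $y_j\sigma\to_\cR^* y_j\sigma'\in\NF_\cR$, and all proper subterms of $u\sigma'$ are normal forms. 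This gives a minimal chain of length $n$ ending in $u\sigma\to_{>\varepsilon,\cR}^* u\sigma'$, and \Cref{lem:exists-the-same-length-innermost-chain} turns it into an innermost minimal chain of length $n$ from $t_0$ to $u\sigma'$.

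\textbf{Step 2: make the chain one step longer.} The terms $u\sigma'$ and $\ell\delta$ differ only below the root. To extend by a $\cP$-step I will instead apply the construction to the prefix ending at $t_{n+1}$: take $u$ as the generalization of $t_{n+1}=r\delta$ built the same way (non-variable because $\cP$ is non-collapsing, and with terminating arguments by minimality). This yields an innermost minimal chain of length $n+1$ from $t_0$. Its proper subterms are normal forms or reducts of terminating terms, so minimality is preserved.

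\textbf{Step 3: the contradiction, which is the main obstacle.} It remains to show that bounded innermost chain length from $t_0$ contradicts the absence of an infinite innermost chain, that is, a K\"onig-style argument. Innermost chains from $t_0$ form a tree whose branches are all finite by assumption. The tree need not be finitely branching, however, since the choice of normal forms and $\cP$-rules is unconstrained. So I cannot directly conclude that the lengths are bounded, and I would avoid König's lemma in this form.

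\textbf{Fallback: build the infinite innermost chain directly.} Apply Step 2 to every prefix of the single infinite chain. The innermost chains constructed in the proof of \Cref{lem:exists-the-same-length-innermost-chain} for prefixes of lengths $n$ and $n+1$ need not agree, because normal forms are chosen per prefix. To fix this, I would fix in advance, for each variable at each stage, one normal form and reuse it. I would then check that the inductive construction of the lemma is coherent when run forward, i.e.\ that it extends prefixes. If that coherence fails, I would use the intro's strategy: choose a start term $s^\#$ of an infinite minimal chain such that the maximal innermost chain length from it is some finite $n$, assuming this supremum is attained. Step 2 then produces length $n+1$, a contradiction. I would justify attainment via minimality, which gives termination of proper subterms.
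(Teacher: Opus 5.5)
Your Steps 1--2 are fine. Step 2, which generalizes $t_{n+1}=r\delta$ to $u=f(y_1,\ldots,y_k)$ and invokes \Cref{lem:exists-the-same-length-innermost-chain}, is exactly how the paper uses the lemma; Step 1 is not needed.

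The genuine gap is Step 3. You never show that the lengths of innermost minimal chains from the start term are bounded, nor do you construct an infinite innermost minimal chain. Your reason for setting K\"onig's lemma aside is also mistaken. For finite $\cR$ and $\cP$ (the standard setting), the tree of innermost minimal chains from the start term \emph{is} finitely branching, and minimality is what makes it so. Every proper subterm of each $t_i$ is terminating, and a terminating term of a finitely branching TRS has only finitely many reducts. So there are only finitely many $s_{i+1}$ with $t_i \mathrel{\ito_{>\varepsilon,\cR}^*} s_{i+1}$. Moreover, each of the finitely many rules of $\cP$ yields a unique contractum at the root. Hence, if there is no infinite innermost minimal chain, K\"onig's lemma makes the tree finite and a maximal length $n$ exists. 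Your Step 2 applied to the prefix of length $n+1$ then gives the contradiction. This is the paper's proof, which asserts the existence of the maximum in one line. Your ``justify attainment via minimality'' works only through this finite-branching argument, since a tree all of whose branches are finite can still have unbounded depth. Your worry about unconstrained choices is legitimate only if $\cR$ or $\cP$ is infinite.

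The coherence fallback does not work as described. The construction in \Cref{lem:exists-the-same-length-innermost-chain} runs backwards. At each stage, the normalized substitution on $\Var(r)$ is dictated, via \Cref{lem:OL-RL-ISN-implies-SN}, by the target $u\sigma'$, which is in turn determined by the later stages of the prefix. So normal forms fixed stage by stage without look-ahead can block the chain. For instance, take $\cR=\{\symb{a}\to\symb{b},\ \symb{a}\to\symb{c}\}$ and $\cP=\{\symb{F}(x)\to\symb{G}(x),\ \symb{G}(\symb{b})\to\symb{F}(\symb{a})\}$. The chain $\symb{F}(\symb{a})\to_{\varepsilon,\cP}\symb{G}(\symb{a})\to_{>\varepsilon,\cR}\symb{G}(\symb{b})\to_{\varepsilon,\cP}\symb{F}(\symb{a})\to_{\varepsilon,\cP}\cdots$ is infinite and minimal. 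Yet fixing $\symb{c}$ as the normal form of $\symb{a}$ gets stuck at $\symb{G}(\symb{c})$. Making per-prefix choices agree across all prefixes is again a compactness argument, and it needs the same finite branching.
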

\begin{proof}
The \textit{only-if} part is trivial because 
any infinite innermost minimal $(\cP,\cR)$-chain is an infinite minimal $(\cP,\cR)$-chain.
We prove the \textit{if} part by contradiction.
Suppose that there exists no infinite innermost minimal $(\cP,\cR)$-chain and there exists an infinite minimal $(\cP,\cR)$-chain.
Let $s_1 \mathrel{\to_{\varepsilon,\cP}} t_1 \mathrel{\to_{>\varepsilon,\cR}^*} s_2 \mathrel{\to_{\varepsilon,\cP}} t_2 \mathrel{\to_{>\varepsilon,\cR}^*} \cdots$ be an infinite minimal $(\cP,\cR)$-chain.
Since there is no infinite minimal $(\cP,\cR)$-chain starting from $s_1$ by the assumption, there exists the maximum length of innermost minimal $(\cP,\cR)$-chain starting from $s_1$.
Let $n$ be the maximum length of innermost minimal $(\cP,\cR)$-chains starting from $s_1$.
Let us consider the finite prefix chain of length $n+1$:
$s_1 \mathrel{\to_{\varepsilon,\cP}} t_1 \mathrel{\to_{>\varepsilon,\cR}^*} s_2 \mathrel{\to_{\varepsilon,\cP}} t_2 \mathrel{\to_{>\varepsilon,\cR}^*} \cdots \mathrel{\to_{>\varepsilon,\cR}^*} s_{n+1} \mathrel{\to_{\varepsilon,\cP}} t_{n+1}$.
Then, it follows from \Cref{lem:exists-the-same-length-innermost-chain} that 
there exists an innermost minimal $(\cP,\cR)$-chain of length $n+1$.
This contradicts the assumption that $n$ is the maximum length of the finite innermost minimal $(\cP,\cR)$-chains starting from $s_1$.
Therefore, the \textit{if} part holds.
\end{proof}
As for \Cref{lem:exists-the-same-length-innermost-chain},
without assuming $\cP$ non-collapsing, \Cref{thm:inf-DP-chain-iff-inf-innermost-DP-chain} does not hold
(see \Cref{ex:counter-ex-of-wo-non-collapsing}).

By definition, it is clear that 
if $\cR$ is a right-linear overlay TRS, then $\cR\cup\DP(\cR)$ is so.
\begin{proposition}
\label{prop:RL-OVL-implies-RL-OVL-for-DPs}
Let $\cR$ be a right-linear overlay TRS over a signature $\Sigma$.
Then, for any subset $\cP$ of $\DP(\cR)$ (i.e., $\cP \subseteq \DP(\cR)$),
$\cR\cup\cP$ is a right-linear overlay TRS.
\end{proposition}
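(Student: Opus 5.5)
The plan is to check the two properties, right-linearity and the overlay property, separately for $\cR\cup\cP$, using only the definition of $\DP(\cR)$. Rules of $\DP(\cR)$ and rules of $\cR$ live over the signature $\Sigma\cup\cD^\#_\cR$, with $\cR$ regarded as a TRS over this extended signature.

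\textbf{Right-linearity.} Take a pair $\ell^\#\to t^\#\in\cP$. It comes from some rule $\ell\to r\in\cR$, where $t$ is a subterm of $r$ and $root(t)\in\cD_\cR$. Since $r$ is linear, its subterm $t$ is linear. Replacing the root symbol of $t$ by $root(t)^\#$ changes no variable occurrences, so $t^\#$ is linear as well. Rules of $\cR$ are right-linear by hypothesis, so $\cR\cup\cP$ is right-linear.

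\textbf{Overlay property.} Suppose $\cR\cup\cP$ had an inner-overlap: rules $\ell_1\to r_1$ and $\ell_2\to r_2$ in $\cR\cup\cP$, and a proper non-variable subterm $s$ of $\ell_1$ (renamed apart) that unifies with $\ell_2$. The key observation is that tuple symbols occur in left-hand sides only at the root, and only in left-hand sides of $\cP$. This holds because every left-hand side in $\cP$ has the form $\ell^\#$ with $\ell$ a left-hand side of $\cR$ over $\Sigma$, and every left-hand side of $\cR$ is over $\Sigma$. We then split into cases on where $\ell_2\to r_2$ comes from.
\begin{itemize}
\item If $\ell_2\to r_2\in\cP$, then $root(\ell_2)\in\cD^\#_\cR$. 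But $s$ is a proper subterm, so its root lies in $\Sigma$, and the two cannot unify.
\item If $\ell_2\to r_2\in\cR$ and $\ell_1\to r_1\in\cR$, this is an inner-overlap of $\cR$, which is impossible.
\item If $\ell_2\to r_2\in\cR$ and $\ell_1=\ell^\#$ with $\ell\to r\in\cR$, then the proper non-variable subterms of $\ell^\#$ are exactly those of $\ell$, since only the root symbol differs. So $s$ is a proper non-variable subterm of $\ell$ unifiable with $\ell_2$, again an inner-overlap of $\cR$, which is impossible.
\end{itemize}
Hence $\cR\cup\cP$ is an overlay system.

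There is no real obstacle here. The only point needing care is the third case: checking that the proper subterms of $\ell^\#$ coincide with those of $\ell$, and that renaming apart does not interfere. This is immediate because positions other than $\varepsilon$ are unaffected by marking the root.
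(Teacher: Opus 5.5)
Your proof is correct and takes the same approach as the paper, which dismisses the proposition as ``trivial by definition''. You spell out that definitional check: $t^\#$ inherits linearity from $r$, and the case split shows that tuple-rooted left-hand sides cannot create inner-overlaps, while marking the root of $\ell$ leaves its proper subterms unchanged. The freshness of the tuple symbols in $\cD^\#_\cR$ with respect to $\Sigma$, which you use implicitly, is the standard assumption and is also implicit in the paper.
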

\begin{proof}
Trivial by definition.
\end{proof}

Since $\DP(\cR)$ is a non-collapsing TRS, by \Cref{prop:RL-OVL-implies-RL-OVL-for-DPs},  we can choose any subset of $\DP(\cR)$ as $\cP$ in \Cref{thm:inf-chain-iff-inf-innermost-chain}.

\begin{theorem}
\label{thm:inf-DP-chain-iff-inf-innermost-DP-chain}
Let $\cR$ be a TRS, %
and $\cP \subseteq \DP(\cR)$ such that 
$\cR\cup\cP$ is a right-linear overlay system.
Then,
there exists no infinite minimal $(\cP,\cR)$-chain if and only if there exists no infinite innermost minimal $(\cP,\cR)$-chain.
\end{theorem}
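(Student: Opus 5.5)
The statement should follow directly from \Cref{thm:inf-chain-iff-inf-innermost-chain}. The plan is to check its hypotheses for $\cP \subseteq \DP(\cR)$ and to check that the two notions of chain agree.

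\textbf{Hypotheses of \Cref{thm:inf-chain-iff-inf-innermost-chain}.} Take $\Sigma' = \Sigma \cup \cD^\#_\cR$, an extended signature of $\Sigma$. Every DP has the form $\ell^\# \to t^\#$ with $root(t) \in \cD_\cR$, so its right-hand side has a tuple symbol at the root and is not a variable. Hence $\DP(\cR)$, and therefore $\cP$, is non-collapsing. By assumption $\cR\cup\cP$ is a right-linear overlay system. So \Cref{thm:inf-chain-iff-inf-innermost-chain} applies to $(\cP,\cR)$ with the generalized notion of chain.

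\textbf{Translating chains.} It remains to see that infinite (innermost) minimal $(\cP,\cR)$-chains in the DP sense of \Cref{sec:preliminaries} exist if and only if infinite (innermost) minimal chains in the generalized sense exist.

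Take a DP chain $s_1^\#\sigma_1 \to_{\cP} t_1^\#\sigma_1 \to_\cR^* s_2^\#\sigma_2 \to_\cP \cdots$. Tuple symbols do not occur in $\cR$, and every term here has a tuple symbol at the root. So each $\cR$-step occurs below the root, each $\cP$-step occurs at the root, and we obtain a generalized chain.

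\textbf{Minimality.} In the DP sense, $t_i^\#\sigma_i$ is terminating w.r.t.\ $\cR$. Since $\cR$-steps occur only below the root, this holds iff every argument of $t_i^\#\sigma_i$ is terminating, i.e., iff every proper subterm is terminating, which is generalized minimality.

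\textbf{Innermostness.} The DP-innermost requirement is $t_i^\#\sigma_i \ito_\cR^* s_{i+1}^\#\sigma_{i+1}$ with $s_{i+1}^\#\sigma_{i+1}$ a normal form. This coincides with $\ito_{>\varepsilon,\cR}^*$ ending in an $\cR$-normal form over $\Sigma'$, which is the generalized requirement. Conversely, every generalized chain whose steps are rules of $\cP$ is a DP chain, with the substitutions read off from the root $\cP$-steps. The initial segment $t_0 \to^*_{>\varepsilon,\cR} s_1$ can simply be dropped by starting the chain at $s_1$.

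\textbf{Main obstacle.} The only delicate point is the minimality correspondence, namely that termination of $t_i^\#\sigma_i$ is equivalent to termination of its proper subterms. This rests on the root being a tuple symbol that no $\cR$-rule can rewrite. Having established the correspondence, both directions of the claimed equivalence follow from \Cref{thm:inf-chain-iff-inf-innermost-chain}. The \textit{only-if} direction is in any case trivial.
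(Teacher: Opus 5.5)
Your proposal is correct and follows the paper's own route: the paper's proof simply cites \Cref{thm:inf-chain-iff-inf-innermost-chain} (with \Cref{prop:RL-OVL-implies-RL-OVL-for-DPs} and the fact that $\DP(\cR)$ is non-collapsing), and your explicit translation between DP chains and generalized chains spells out what the paper leaves implicit. One small fix in that translation: when turning an infinite DP-minimal chain into a generalized one, start at $t_0 := t_1^\#\sigma_1$ (dropping the first pair) rather than at $s_1^\#\sigma_1$, because DP-minimality says nothing about the arguments of $s_1^\#\sigma_1$, whereas generalized minimality requires the proper subterms of $t_0$ to terminate.
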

\begin{proof}
An immediate consequence of \Cref{thm:inf-chain-iff-inf-innermost-chain,prop:RL-OVL-implies-RL-OVL-for-DPs}.
\end{proof}

The following claim is an immediate consequence of \Cref{thm:DP-criteria,thm:inf-DP-chain-iff-inf-innermost-DP-chain}.

\begin{restatable}{theorem}{ThmSNiffSINforRLOverlayTRSs}
\label{thm:SN-iff-SIN-for-RL-overlay-TRSs}
A right-linear overlay TRS is terminating
if and only if
it is innermost terminating.
\end{restatable}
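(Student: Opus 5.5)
The plan is to derive \Cref{thm:SN-iff-SIN-for-RL-overlay-TRSs} from \Cref{thm:DP-criteria} and \Cref{thm:inf-DP-chain-iff-inf-innermost-DP-chain}, taking $\cP = \DP(\cR)$. Let $\cR$ be a right-linear overlay TRS.

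The \emph{only-if} direction is immediate. Every innermost rewrite sequence is a rewrite sequence, so termination implies innermost termination.

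For the \emph{if} direction, assume $\cR$ is innermost terminating.
\begin{enumerate}
\item By the second item of \Cref{thm:DP-criteria}, there is no infinite minimal innermost $(\DP(\cR),\cR)$-chain. Here I read ``no chain'' in \Cref{thm:DP-criteria} as ``no infinite chain'', as is standard.
\item By \Cref{prop:RL-OVL-implies-RL-OVL-for-DPs} with $\cP=\DP(\cR)$, the system $\cR\cup\DP(\cR)$ is a right-linear overlay system.
\item \Cref{thm:inf-DP-chain-iff-inf-innermost-DP-chain} therefore applies and gives that there is no infinite minimal $(\DP(\cR),\cR)$-chain.
\item The first item of \Cref{thm:DP-criteria} then yields that $\cR$ is terminating.
\end{enumerate}

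The main obstacle is a notational mismatch. \Cref{thm:DP-criteria} uses the original DP-chain notion from \Cref{sec:preliminaries}: sequences of DPs with substitutions $\sigma_i$, where minimality requires the terms $t_i^\#\sigma_i$ to be terminating. \Cref{thm:inf-DP-chain-iff-inf-innermost-DP-chain} is proved for the generalized notion from \Cref{sec:from-innermost-termination-to-termination}: $\to_{\varepsilon,\cP}$-steps followed by $\to_{>\varepsilon,\cR}^*$-steps, where minimality requires the proper subterms of the $t_i$ to be terminating. I would bridge the two notions in both directions.

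\textbf{Old to new.} An old infinite minimal chain gives a new one. The segments $t_i^\#\sigma_i \to_\cR^* s_{i+1}^\#\sigma_{i+1}$ have tuple-symbol roots that are not defined symbols of $\cR$, so every step in them is below the root. Termination of $t_i^\#\sigma_i$ gives termination of its proper subterms.

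\textbf{New to old.} A new minimal chain whose $t_i$ have tuple roots is terminating at each $t_i$, since no $\cR$-rule applies at a tuple root. The innermost conditions of the two notions also coincide: $s_{i+1}$ must be an $\cR$-normal form, and innermost $\cR$-steps cannot occur at the root.

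With these translations, the chain of implications above goes through.
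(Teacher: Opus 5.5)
Your proposal is correct and follows the paper's own route: \Cref{prop:RL-OVL-implies-RL-OVL-for-DPs} with $\cP=\DP(\cR)$, then the chain equivalence, then both items of \Cref{thm:DP-criteria}. The paper cites \Cref{thm:inf-chain-iff-inf-innermost-chain} directly, and \Cref{thm:inf-DP-chain-iff-inf-innermost-DP-chain}, which you use, is just its instance. Your explicit translation between the two chain notions is a step the paper leaves implicit, and it is sound provided that, in the old-to-new direction, you start the new chain at $t_0 = t_1^\#\sigma_1$: old-style minimality says nothing about the proper subterms of $s_1^\#\sigma_1$.
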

\begin{proof}
Since $\cR$ is a right-linear overlay system,
by \Cref{prop:RL-OVL-implies-RL-OVL-for-DPs},
$\cR\cup\DP(\cR)$ is a right-linear overlay system.
It follows from \Cref{thm:inf-chain-iff-inf-innermost-chain} that
there exists no infinite minimal $(\DP(\cR),\cR)$-chain if and only if there exists no infinite innermost minimal $(\DP(\cR),\cR)$-chain.
Therefore, by \Cref{thm:DP-criteria}, the claim holds.
\end{proof}
One may think that \Cref{lem:reduction-simulated-by-innermost-one} entails the \textit{if} part of the claim in \Cref{thm:SN-iff-SIN-for-RL-overlay-TRSs}.
In fact, \Cref{lem:reduction-simulated-by-innermost-one} made us conjecture the \textit{if} part.
To prove the \textit{if} part, it would be usual to use contradiction by constructing an infinite innermost rewrite sequence from an infinite rewrite sequence.
However, \Cref{lem:reduction-simulated-by-innermost-one} cannot be applied to infinite rewrite sequences in order to construct infinite innermost rewrite sequences.
On the other hand, all proper subterms in minimal $(\cP,\cR)$-chains are terminating w.r.t.\ $\cR$ and thus, \Cref{lem:reduction-simulated-by-innermost-one} can be applied locally (i.e., $\to_{>\varepsilon,\cR}^*$ steps between $\to_{\varepsilon,\cP}$ steps).

Finally, we show some examples to explain our assumption, i.e., the necessity of ``right-linearity and an overlay system'', while it is not necessary condition.
\begin{example}
Let us consider the following TRS, which is right-linear, confluent, but not an overlay system:
\[
\cRfab =
\{~
\symb{f}(\symb{a}) \to \symb{f}(\symb{a}), ~~
\symb{a} \to \symb{b} 
~\}
\]
This TRS is innermost terminating but not terminating because we have an infinite non-innermost rewrite sequence
$\symb{f}(\symb{a}) \to_{\cRfab} \symb{f}(\symb{a}) \to_{\cRfab} \cdots$.
There is no innermost rewrite sequence that simulates the infinite rewrite sequence because of the inner-overlap between the first and second rules of $\cRfab$.
\end{example}

\begin{example}
Let us consider the following TRS, which is an overlay system, but neither right-linear nor locally confluent:
\[
\cRghabc =
\{~
\symb{g}(x) \to \symb{h}(x,x), ~~
\symb{h}(\symb{b},\symb{c}) \to \symb{g}(\symb{a}), ~~
\symb{a} \to \symb{b}, ~~
\symb{a} \to \symb{c} 
~\}
\]
This TRS is innermost terminating, but not terminating because we have an infinite non-innermost rewrite sequence
$\symb{g}(\symb{a}) \to_{\cRghabc} \symb{h}(\symb{a},\symb{a}) \to_{\cRghabc} \symb{h}(\symb{b},\symb{a}) \to_{\cRghabc} \symb{h}(\symb{b},\symb{c}) \to_{\cRghabc} \symb{g}(\symb{a}) \to_{\cRghabc} \cdots$.
There is no innermost rewrite sequence that simulates the infinite rewrite sequence because, to apply the second rule, $\symb{a}$ of $\symb{g}(\symb{a})$, which is duplicated in applying the first rule to $\symb{g}(\symb{a})$, should be reduced after the application of the first rule to $\symb{g}(\symb{a})$, but $\symb{a}$ of $\symb{g}(\symb{a})$ must be reduced first under the innermost strategy.
The duplicated redex $\symb{a}$ must be reduced to $\symb{b}$ and $\symb{c}$, respectively, to apply the second rule:
$\symb{h}(\symb{a},\symb{a}) \to_{\cRghabc} \symb{h}(\symb{b},\symb{a}) \to_{\cRghabc} \symb{h}(\symb{b},\symb{c})$.
\end{example}

Our two assumptions ``right-linearity'' and ``being an overlay system'' are directly used in proving \Cref{lem:OL-RL-ISN-implies-SN}, which is the key claim for the results in this paper.
For the induction step $f(s_1,\ldots,s_n)\sigma \mathrel{\to_{>\varepsilon,\cR}^*} f(\ell_1,\ldots,\ell_n)\theta \mathrel{\to_{\varepsilon,\cR}} r\theta \mathrel{\to_\cR^*} t$ in the proof,
right-linearity ensures linearity of $r$, which is necessary to apply the induction hypothesis:
we obtain a normalized substitution $\theta'$ such that
$f(\ell_1,\ldots,\ell_n)\theta \mathrel{\to_{>\varepsilon,\cR}^*} f(\ell_1,\ldots,\ell_n)\theta' \mathrel{\ito_\cR} r\theta' \mathrel{\ito_\cR^!} t$.
Then, for $s_\sigma \mathrel{\to_\cR^*} \ell\theta'$, by the induction hypothesis, we obtain 
substitutions $\sigma_1,\ldots,\sigma_n$ such that 
$s_i\sigma \mathrel{\to_\cR^*} s_i\sigma_i \mathrel{\ito_\cR^*} \ell_i\theta'$
and $\Dom(\sigma_i) \subseteq \Var(s_i)$ for all $i \in \{1,\ldots,n\}$.
Linearity of $s$ is used to construct a substitution $\sigma' = \bigcup_{i=1}^n \sigma_i$:
linearity of $s$ implies that $\Var(s_i) \cap \Var(s_j) = \emptyset$ for $i\ne j$ and thus, 
$\Dom(\sigma_i) \cap \Dom(\sigma_j) = \emptyset$;
therefore, $\bigcup_{i=1}^n \sigma_i$ is a substitution.
Linearity of $s$ can be replaced by local confluence of $\cR$;
since $s\sigma$ is terminating, the reduction starting from $s\sigma$ is confluent;
suppose that a variable $x$ appears both in $s_i$ and $s_j$ ($i\ne j$);
then, it follows from confluence that $x\sigma_i = x\sigma_j$;
therefore, $\bigcup_{i=1}^n \sigma_i$ is a substitution.
In summary, assuming local confluence of $\cR$ instead of right-linearity, \Cref{lem:OL-RL-ISN-implies-SN} holds and implies \Cref{thm:SIN-SN-for-LCR-overlay-systems}.

\section{Related Work}
\label{sec:related-work}

Innermost reduction is one of the most fundamental strategies, which models eager evaluation of programming languages.
Therefore, the innermost strategy has been the subject of various studies for a long time, and in recent years it has attracted renewed attention in new types of rewriting computation models as, e.g., \emph{almost-sure innermost termination} of probabilistic term rewriting~\cite{KDG23,KFG24,KG25}.
In~\cite{dPZ05}, two kinds of generalized innermost rewriting have been investigated and equivalence of termination of such generalized innermost rewriting are equivalent to innermost termination.
The \emph{non-dup-generalized innermost rewriting} in~\cite[Definition~5]{dPZ05} is very closed to the discussions in this paper in the sense that right-linearity is required to avoid duplication of reducible subterms of redexes.
Regarding termination, confluence, and reachability of innermost rewriting, some decidable classes have been investigated \cite{GH07,KS08rta,KSNKS09,God10,USS10,IOS19}.

The closest work of this paper is~\cite[Theorem~3.23]{Gra95} (\Cref{thm:SIN-SN-for-LCR-overlay-systems} in this paper).
Since the result predates the proposal of dependency pairs, it does not use chains in its proof, unlike this paper.
However, similar to how minimal chains trace infinite rewrite sequences, rewrite sequences corresponding to chains are extracted from infinite rewrite sequences.
As shown in \Cref{ex:RLOS-LCROS-incomparable}, the classes of ``right-linear overlay systems'' and ``locally confluent overlay systems'' are incompatible.
Coincidence of termination and innermost termination for right-linear overlay TRSs has been shown in an unpublished note~\cite[Theorem~5]{Sakai03note}.
The proof in \cite[Theorem~5]{Sakai03note} is based on the approach in \cite{Gra96}, and dependency pairs are not used, e.g., there is no similar result to our main claim (\Cref{thm:inf-chain-iff-inf-innermost-chain}) in~\cite{Sakai03note}.

As described in \Cref{sec:intro}, a potential application area for the results in this paper is the dependency framework~\cite{GTSK04,GTSKF06}.
\Cref{thm:inf-DP-chain-iff-inf-innermost-DP-chain} replaces the side condition ``local confluence'' of the DP processor in order to switch termination to innermost termination~\cite[Theorem~32]{GTSK04} by ``right-linearity''.
Innermost rewriting may be worse than standard rewriting in the sense of \cite[Definition~6]{vO07}---in term of the number of rewrite steps to normal forms (see~\cite[Theorem~12]{dPZ05}).
On the other hand, in proving termination of a TRS, innermost rewriting would be useful if termination and innermost termination of the TRS are equivalent.

\section{Conclusion}
\label{sec:conclusion}

In this paper, we first showed that for a right-linear overlay TRS $\cR$ and a set $\cP \subseteq \DP(\cR)$, there is no infinite minimal $(\cP,\cR)$-chain if and only if there is no infinite innermost minimal $(\cP,\cR)$-chain.
Then, using the claim, we showed that 
termination and innermost termination coincide for the class of right-linear overlay TRSs.
Local confluence is not a syntactic property, while there are some syntactic sufficient conditions (e.g., non-overlappingness) for local confluence.
On the other hand, both right-linearity and being overlay systems are syntactic properties, and hence it is not so expensive to decide whether a given TRS is a right-linear overlay system.
In future work, we plan to empirically evaluate the results in this paper from the perspective of increasing the proof power for termination of TRSs using the switching processor mentioned in the last paragraph of \Cref{sec:intro}.
A more precise comparison with the non-dup-generalized innermost rewriting in~\cite[Definition~5]{dPZ05} is also an interesting further direction of this research.

\bibliography{mybiblio}

\appendix
\crefalias{section}{appsec}

\section{Proof of \Cref{lem:OL-RL-ISN-implies-SN}}
\label{sec:LemOlRlIsnImpliesSn}

We show a more general claim than \Cref{lem:OL-RL-ISN-implies-SN}, which considers a terminating term $s_0$ in order to use well-founded induction on ${\to_\cR}\cup{\rhd}$ over terms reachable from $s_0$.
\begin{restatable}{lemma}{AuxLemOlRlIsnImpliesSn}
\label{auxlem:OL-RL-ISN-implies-SN}
Let $\cR$ be a right-linear overlay TRS over a signature $\Sigma$, 
$s_0$ be a terminating term in $T(\Sigma,\cV)$,
$s$ be a linear term in $T(\Sigma,\cV)$,
$t$ be a normal form of $\cR$ over $\Sigma$ (i.e., $t \in \NF_\cR(\Sigma,\cV)$),
and
$\sigma$ be a substitution.
If $s_0 \mathrel{({\to_\cR}\cup{\rhd})^*} s\sigma \mathrel{\to_\cR^!} t$, 
then there exists a substitution $\sigma'$ such that
\begin{itemize}
    \item
    $\Dom(\sigma') = \Dom(\sigma|_{\Var(s)})$,
    \item
    $x\sigma \mathrel{\to_\cR^*} x\sigma' \in \NF_\cR(\Sigma,\cV)$ for all variables $x \in \Var(s)$
    (i.e., $s\sigma \mathrel{\to_\cR^*} s\sigma'$),
        and
    \item 
    $s\sigma' \mathrel{\ito_\cR^!} t$.
\end{itemize}
\end{restatable}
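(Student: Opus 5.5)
We argue by well-founded induction on $s\sigma$ with respect to the relation $({\to_\cR}\cup{\rhd})^+$, restricted to terms reachable from $s_0$. This relation is well founded because $s_0$ is terminating and $\rhd$ commutes with $\to_\cR$ in the usual way. Within this induction we do a case analysis on the shape of $s$ and on the given reduction $s\sigma \mathrel{\to_\cR^!} t$.

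\emph{Variable case.} If $s = x$ is a variable, take $\sigma' = \{x \mapsto t\}$. Then $x\sigma \to_\cR^* t \in \NF_\cR(\Sigma,\cV)$, and $s\sigma' = t \mathrel{\ito_\cR^!} t$ in zero steps. The domain condition is immediate.

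\emph{No root step.} Let $s = f(s_1,\ldots,s_n)$. If the reduction $s\sigma \to^* t$ never rewrites at the root, then $t = f(t_1,\ldots,t_n)$ with $s_i\sigma \to^! t_i$. Each $s_i\sigma$ satisfies $s\sigma \rhd s_i\sigma$, so the induction hypothesis applies and yields $\sigma_i$ with $\Dom(\sigma_i) = \Dom(\sigma|_{\Var(s_i)})$. Since $s$ is linear, the sets $\Var(s_i)$ are pairwise disjoint, so $\sigma' = \bigcup_i \sigma_i$ is a well-defined substitution. We then get $s\sigma' = f(s_1\sigma_1,\ldots,s_n\sigma_n) \mathrel{\ito^*} f(t_1,\ldots,t_n)$ by performing the innermost reductions argument by argument. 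Innermostness is preserved under the context $f(\ldots)$.

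\emph{At least one root step.} Take the first root step: $s\sigma \to_{>\varepsilon}^* \ell\theta \to_\varepsilon r\theta \to^* t$ with $\ell \to r \in \cR$ and $\ell = f(\ell_1,\ldots,\ell_n)$. Here $r$ is linear and $r\theta$ is reachable from $s_0$ by at least one $\to_\cR$ step. The induction hypothesis applied to $r$, $\theta$ gives $\theta_r$ on $\Var(r)$ with $x\theta \to^* x\theta_r$ normal and $r\theta_r \mathrel{\ito^!} t$. For variables of $\ell$ not occurring in $r$, we choose arbitrary normal forms of $x\theta$; these exist because $x\theta$ is a subterm of a terminating term. This defines $\theta'$. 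Because $\cR$ is an overlay system and $\theta'$ is normalized, every proper subterm of $\ell\theta'$ is a normal form; this is the same argument as in the footnote of \Cref{lem:exists-the-same-length-innermost-chain}. Hence $\ell\theta' \mathrel{\ito_{\varepsilon}} r\theta'= r\theta_r \mathrel{\ito^!} t$.

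It remains to innermost-reach $\ell\theta'$ from an instance of $s$. We have $s_i\sigma \to^* \ell_i\theta \to^* \ell_i\theta'$, and $\ell_i\theta'$ is a normal form. Each $s_i\sigma$ is a proper subterm of $s\sigma$, so the induction hypothesis applies and gives $\sigma_i$ with $s_i\sigma_i \mathrel{\ito^!} \ell_i\theta'$. Linearity of $s$ again lets us glue these into $\sigma' = \bigcup_i\sigma_i$. This yields $s\sigma' \mathrel{\ito^*} \ell\theta' \mathrel{\ito} r\theta' \mathrel{\ito^!} t$, and the domain and normalization conditions hold by construction.

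The main obstacle is the bookkeeping in this last case. We must check that every term to which the induction hypothesis is applied is strictly below $s\sigma$ in $({\to_\cR}\cup{\rhd})^+$ and still reachable from $s_0$. We must also check that the first root rewrite step from $\ell\theta'$ is genuinely innermost, which is exactly where the overlay assumption is used. A further point is that the reductions $s_i\sigma \to^! \ell_i\theta'$ might pass through non-normal-form intermediate terms. This is harmless, because the hypothesis only requires that the target be a normal form.
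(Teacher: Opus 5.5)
Your proposal is correct and follows essentially the same route as the paper. The paper uses well-founded induction on $({\to_\cR}\cup{\rhd})^+$ over terms reachable from $s_0$, takes $\sigma'=\{x\mapsto t\}$ in the variable case, and glues the argument-wise results via linearity of $s$ when there is no root step. In the root-step case it applies the induction hypothesis to $r\theta$ (using right-linearity), normalizes the remaining variables of $\ell$, uses the overlay property to make $\ell\theta'$ an innermost redex, and reapplies the induction hypothesis to each $s_i\sigma \to_\cR^! \ell_i\theta'$ before gluing. Your proof does all of this in the same way.
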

\begin{proof}
Since $s_0$ is terminating w.r.t.\ $\cR$,
the binary relation $\{ (s,s') \mid s_0 \mathrel{\to_\cR^*} s \mathrel{\to_\cR} s' \}$ is well-founded.
Let 
${\SuccST} = \{ (s',s'') \mid s_0 \mathrel{({\to_\cR}\cup{\rhd})^*} s'
\mathrel{({\to_\cR}\cup{\rhd})^+} s''
\}$.
Then, it follows from~\cite[Lemma~7.2.4]{Ohl02} that $\SuccST$ is well-founded.
We prove the claim by induction on $\SuccST$, in a manner similar to that of~\cite[Lemma~3.4]{SOS03}.

We first consider the case where $s$ is a variable $x$. 
Let $\sigma' = \{ x \mapsto t \}$.
Then, we have that $x\sigma = s\sigma \mathrel{\to_\cR^*} t = x\sigma' \in \NF_\cR(\Sigma,\cV)$
and $s\sigma' = x\sigma' = t$.
Therefore, the claim holds.

Next, we consider the remaining case where $s$ is not a variable.
Let $s = f(s_1,\ldots,s_n)$.
We make a case analysis depending on whether $s\sigma \mathrel{\to_\cR^!} t$ includes a topmost step $\to_{\varepsilon,\cR}$ or not.
\begin{itemize}
    \item Case where $s\sigma \mathrel{\to_{>\varepsilon,\cR}^!} t$.
    In this case, $t$ is of the form $f(t_1,\ldots,t_n)$ and we have that
    $s_i\sigma \mathrel{\to_{\cR}^!} t_i$ for all $1 \leq i \leq n$.
    Let $\sigma_i = \sigma|_{\Var(s_i)}$.
    Since $s$ is linear and $\Dom(\sigma|_{\Var(s)}) \subseteq \Var(s)$, 
    we have that $\Dom(\sigma_i) \cap \Dom(\sigma_j) = \emptyset$ for each $i,j \in \{1,\ldots,n\}$ with $i \ne j$.
    Note that $\sigma|_{\Var(s)} = \bigcup_{i=1}^n \sigma_i$.
    Since $s_i$ is a proper subterm of $s$, $s_i$ is linear and thus $s_i\sigma_i$ is a proper subterm of $s\sigma = f(s_1\sigma,\ldots,s_n\sigma)$.
    Thus, we have that $s_0 \mathrel{({\to_\cR}\cup{\rhd})^*} s\sigma \mathrel{\rhd} s_i\sigma_i$ (i.e., $s\sigma \SuccST s_i\sigma_i$).
    By the induction hypothesis, 
    for each $i \in \{1,\ldots,n\}$,
    there exists a substitution $\sigma'_i$ such that 
    \begin{itemize}
        \item $\Dom(\sigma'_i) = \Dom(\sigma_i|_{\Var(s_i)}) = \Dom(\sigma_i)$,
        \item $x\sigma \mathrel{\to_\cR^*} x\sigma'_i \in \NF_\cR(\Sigma,\cV)$ for all variables $x \in \Var(s_i)$,
            and
        \item $s_i\sigma'_i \mathrel{\ito_\cR^!} t_i$.
    \end{itemize}
    Since $\Dom(\sigma_i) \cap \Dom(\sigma_j) = \emptyset$ for each $i,j \in \{1,\ldots,n\}$ with $i \ne j$,
    we let $\sigma' = \bigcup_{i=1}^n \sigma'_i$, which is a substitution such that
    $\Dom(\sigma') = \Dom(\sigma|_{\Var(s)})$.
    Then, we have that 
    \begin{itemize}
        \item $x\sigma \mathrel{\to_\cR^*} x\sigma' \in \NF_\cR(\Sigma,\cV)$ for all variables $x \in \Var(s)$,
            and
        \item $s\sigma' = 
        f(s_1\sigma',\ldots,s_n\sigma')
        \mathrel{\ito_\cR^!}
        f(t_1,\ldots,t_n)
        = t$.
    \end{itemize}
    Therefore, the claim holds.    

    \item Case where $s\sigma \mathrel{\to_{>\varepsilon,\cR}^*} \cdot \mathrel{\to_{\varepsilon,\cR}} \cdot \mathrel{\to_\cR^!} t$.
    The overview of the proof for this case can be seen in \Cref{fig:overview-of-main-case-of-Auxiliary-Lemma}.
    Assume that 
    \[
    s\sigma = f(s_1\sigma,\ldots,s_n\sigma) \mathrel{\to_{>\varepsilon,\cR}^*} f(s'_1,\ldots,s'_n)
    =
    \ell\theta
    \mathrel{\to_{\varepsilon,\cR}}
    r\theta
    \mathrel{\to_\cR^!}
    t
    \]
    where
    $\ell \to r \in \cR$ and $\Dom(\theta) = \Var(\ell,r)$.
    Let $\theta_{\ell\setminus r}$ and $\theta_r$ be substitutions such that
    \begin{itemize}
        \item $\Dom(\theta_{\ell\setminus r}) = \Var(\ell) \setminus \Var(r)$,
        \item $\Dom(\theta_{r}) = \Var(r)$,
            and
        \item $\theta = \theta_{\ell\setminus r} \cup \theta_r$.
    \end{itemize}
    Since $\cR$ is right-linear, $r$ is linear.
    By the induction hypothesis, 
    there exists a substitution $\theta'_{r}$ such that 
    \begin{itemize}
        \item $\Dom(\theta'_r) = \Dom(\theta_r|_{\Var(r)}) = \Dom(\theta_r)$,
        \item $x\theta_r \mathrel{\to_\cR^*} x\theta'_r \in \NF_\cR(\Sigma,\cV)$ for all variables $x \in \Var(r)$,
            and
        \item $r\theta'_r \mathrel{\ito_\cR^!} t$.
    \end{itemize}
    Since $s_0$ is terminating w.r.t.\ $\cR$, for any variable $x \in \Var(\ell) \setminus \Var(r)$, $x\theta_{\ell\setminus r}$ is terminating and has a normal form.
    Let $\theta'_{\ell\setminus r}$ be a substitution such that
    \begin{itemize}
        \item $\Dom(\theta'_{\ell\setminus r}) = \Dom(\theta_{\ell\setminus r})$,
            and
        \item for any variable $x \in \Dom(\theta'_{\ell \setminus r})$, $x\theta'_{\ell \setminus r}$ is a normal form of $x\theta_{\ell \setminus r}$, i.e., $x\theta_{\ell \setminus r} \mathrel{\to_\cR^*} x\theta'_{\ell \setminus r}$.
    \end{itemize}
    Let $\theta' = \theta'_{\ell \setminus r} \cup \theta'_r$.
    Then, $\theta'$ is a substitution such that
        $\ell\theta = \ell(\theta_{\ell\setminus r}\cup \theta_r) \mathrel{\to_\cR^*} \ell(\theta'_{\ell\setminus r}\cup\theta'_r) = \ell\theta'$.

\begin{figure}[t]
\hfil
\xymatrix@R=28pt@C=6pt{
s\sigma \ar@{}[r]|(.22){\mbox{$=$}}
& f(s_1\sigma,\ldots,s_n\sigma) \ar[rrrr]^\ast_(.6){>\varepsilon,\cR} \ar@{..>}[d]_\ast^(.75){>\varepsilon,\cR} 
\ar@{}[drrrr]|{\mbox{\raisebox{5pt}{\small I.H.\ for $s_i\sigma \to_\cR^* \ell_i\theta'$}}}
&&&& f(\ell_1\theta,\ldots,\ell_n\theta) \ar@{}[r]|(.77){\mbox{$=$}} \ar@{..>}[d]_\ast^(.75){>\varepsilon,\cR}
& \ell\theta \ar[rrr]_(.6){\varepsilon,\cR} 
&&& r\theta \ar[rrrr]^{!}_(.8){\cR} \ar@{..>}[d]_\ast^(.75){\cR}
&&&& t
\\
s\sigma' \ar@{}[r]|(.22){\mbox{$=$}}
& f(s_1\sigma',\ldots,s_n\sigma') \ar@{..>}[rrrr]^{\mathsf{i}~~\ast}_(.6){\cR} 
&&&& f(\ell_1\theta',\ldots,\ell_n\theta') \ar@{}[r]|(.78){\mbox{$=$}}
& \ell\theta' \ar[rrr]^{\mathsf{i}}_(.65){\varepsilon,\cR}
&&& r\theta' \ar@{..>}@/_15pt/[urrrr]^(.4){\mathsf{i}}|(.6){\raisebox{10pt}{\scriptsize !}}_(.8){\cR} \ar@{}[urrrr]|{\mbox{\raisebox{5pt}{\small I.H.}}}
&&&&& %
\\
}
\caption{An overview of the proof for the case where $s$ is not a variable and $s\sigma \mathrel{\to_{>\varepsilon,\cR}^*} \ell\theta \mathrel{\to_{\varepsilon,\cR}} r\theta \mathrel{\to_\cR^*} t$ in \Cref{auxlem:OL-RL-ISN-implies-SN}, where solid arrows represent assumptions and dotted arrows represent consequences.}
\label{fig:overview-of-main-case-of-Auxiliary-Lemma}
\end{figure}

    Since $\ell$ is not a variable and $f(s'_1,\ldots,s'_n) = \ell\theta$, $\ell$ is rooted by $f$ and 
    $\ell\theta \mathrel{\to_{>\varepsilon,\cR}^*} \ell\theta'$.
    Let $\ell = f(\ell_1,\ldots,\ell_n)$.
    We now show that $\ell_i\theta' \in \NF_\cR(\Sigma,\cV)$ for any $i \in \{1,\ldots,n\}$.
    We proceed by contradiction.
    Assume that $\ell_i\theta' \notin \NF_\cR(\Sigma,\cV)$ for some $i \in \{1,\ldots,n\}$.
    Then, there exist a position $p \in \Pos(\ell_i)$, a rule $\ell' \to r' \in \cR$, and a substitution $\delta$ such that
    $\ell_i|_p \notin \cV$
    and
    $\ell_i|_p\theta' = \ell'\delta$.
    Assume w.l.o.g.\ that $\Var(\ell,r) \cap \Var(\ell',r') = \emptyset$.
    Then, $\ell_i|_p$ and $\ell'$ are unifiable.
    $\ell' \to r'$ overlaps with $f(\ell_1,\ldots,\ell_n) \to r$ at position $p > \varepsilon$.
    This contradicts the assumption that $\cR$ is an overlay system.
    Thus, $\ell_i\theta' \in \NF_\cR(\Sigma,\cV)$ for any $i \in \{1,\ldots,n\}$
    and $\ell\theta' = f(\ell_1\theta',\ldots,\ell_n\theta') \mathrel{\ito_{\varepsilon,\cR}} r\theta'$.

    Since $s\sigma = f(s_1\sigma,\ldots,s_n\sigma) \mathrel{\to_{>\varepsilon,\cR}^*} 
    f(\ell_1\theta,\ldots,\ell_n\theta) 
    \mathrel{\to_{>\varepsilon,\cR}^*}
    f(\ell_1\theta',\ldots,\ell_n\theta')
    $, we have that
    $s_i\sigma \mathrel{\to_{\cR}^!} \ell_i\theta'$ for all $1 \leq i \leq n$.
    Let $\sigma_i = \sigma|_{\Var(s_i)}$.
    Since $s$ is linear and $\Dom(\sigma|_{\Var(s)}) \subseteq \Var(s)$, 
    we have that $\Dom(\sigma_i) \cap \Dom(\sigma_j) = \emptyset$ for each $i,j \in \{1,\ldots,n\}$ with $i \ne j$.
    Note that $\sigma|_{\Var(s)} = \bigcup_{i=1}^n \sigma_i$.
    Since $s_i$ is a proper subterm of $s$, $s_i$ is linear and thus $s_i\sigma_i$ is a proper subterm of $s\sigma = f(s_1\sigma,\ldots,s_n\sigma)$.
    Thus, we have that $s_0 \mathrel{({\to_\cR}\cup{\rhd})^*} s\sigma \mathrel{\rhd} s_i\sigma_i$ and hence $s\sigma \SuccST s_i\sigma_i$.
    By the induction hypothesis, 
    for each $i \in \{1,\ldots,n\}$,
    there exists a substitution $\sigma'_i$ such that 
    \begin{itemize}
        \item $\Dom(\sigma'_i) = \Dom(\sigma_i|_{\Var(s_i)}) = \Dom(\sigma_i)$,
        \item $x\sigma \mathrel{\to_\cR^*} x\sigma'_i \in \NF_\cR(\Sigma,\cV)$ for all variables $x \in \Var(s_i)$,
            and
        \item $s_i\sigma'_i \mathrel{\ito_\cR^!} \ell_i\theta'$.
    \end{itemize}
    Since $\Dom(\sigma_i) \cap \Dom(\sigma_j) = \emptyset$ for each $i,j \in \{1,\ldots,n\}$ with $i \ne j$,
    we let $\sigma' = \bigcup_{i=1}^n \sigma'_i$, which is a substitution such that
    $\Dom(\sigma') = \Dom(\sigma|_{\Var(s)})$.
    Then, we have that 
    \begin{itemize}
        \item $x\sigma \mathrel{\to_\cR^*} x\sigma' \in \NF_\cR(\Sigma,\cV)$ for all variables $x \in \Var(s)$,
            and
        \item $s\sigma' = 
        f(s_1\sigma',\ldots,s_n\sigma')
        \mathrel{\ito_\cR^*}
        f(\ell_1\theta',\ldots,\ell_n\theta')
        = \ell\theta'$.
    \end{itemize}
    Since $\cR$ is an overlay system, all $\ell_1\theta',\ldots,\ell_n\theta'$ are normal forms of $\cR$;
    suppose that $\ell_i\theta'$ is not a normal form of $\cR$;
    then, $\ell_i$ has a non-variable subterm $\ell'_i$ such that $\ell'_i\theta'$ is a redex of $\cR$;
    this contradicts the assumption that $\cR$ is an overlay system.
    Thus, the term $f(\ell_1\theta',\ldots,\ell_n\theta')$ is an innermost redex of $\cR$ and hence
    $s\sigma' \mathrel{\ito_\cR^*} \ell\theta' \mathrel{\ito_{\varepsilon,\cR}} r\theta' \mathrel{\ito_\cR^!} t$.
    Therefore, the claim holds.
\end{itemize}
\vspace{-10pt}
\end{proof}

\LemOlRlIsnImpliesSn*
\begin{proof}
Let $s_0$ be $s\sigma$.
Then, by assumption, $s_0$ is terminating w.r.t.\ $\cR$ and we have that $s_0 \mathrel{({\to_\cR}\cup{\rhd})^*} s\sigma$.
Therefore, the claim immediately follows \Cref{auxlem:OL-RL-ISN-implies-SN}.
\end{proof}

\end{document}